\newcommand{\Op}{\operatorname{Op}}
\tikzset{>=latex}           
\newcommand{\co}{\mathcal{O}}
\newcommand{\ch}{\mathcal{H}}
\newcommand{\ct}{\mathcal{T}}
\newtheorem{example}{Example}
\newcommand{\dt}{\dfrac{d}{dt} }
\newcommand{\brr}[1]{({#1}\rvert}
\newcommand{\kee}[1]{\lvert{#1})}
\newtheorem{theorem}{Theorem}
\newtheorem{lemma}{Lemma}
\newtheorem{corollary}{Corollary}
\begin{document}
\title{From Linear Differential Equations to Unitaries: A Moment-Matching Dilation Framework with Near-Optimal Quantum Algorithms }
\author{Xiantao Li, \\ The Pennsylvania State University, \\
Xiantao.Li@psu.edu }

\begin{abstract}

Quantum speed-ups for dynamical simulation usually demand \emph{unitary} time-evolution, whereas the large ODE/PDE systems encountered in realistic physical models are generically \emph{non-unitary}.  
We present a universal \textit{moment-fulfilling} dilation that embeds any linear, non-Hermitian differential equation  
$\dot x = L(t) x$ with the generator $L=-iH+K$ into a strictly unitary evolution on an enlarged Hilbert space:  
\[
   \Big(\brr{l}   \otimes I_A\Big)  \displaystyle
   \mathcal T e^{-i   \int ( I_A \otimes H + i F\otimes K)  dt}  
   \Big(\kee{r}   \otimes I_A\Big)
   = \mathcal T e^{\int L  dt},
\]
provided the triple $(F,\kee{r},\brr{l})$ satisfies the compact
moment identities $\brr{l}F^{k}\kee{r}=1$ for all $k\ge 0$ in the ancilla space.  
This algebraic criterion recovers both \emph{Schr\"odingerization}
[Phys.\ Rev.\ Lett.\ \textbf{133}, 230602 (2024)] and the
linear-combination-of-Hamiltonians (LCHS) scheme
[Phys.\ Rev.\ Lett.\ \textbf{131}, 150603 (2023)], while also
unveiling whole \emph{families} of new dilations built from
differential, integral, pseudo-differential, and difference
generators. Each family comes with a continuous tuning parameter \emph{and} is
closed under similarity transformations that leave the moments
invariant, giving rise to an overwhelming landscape of design space 
that allows quantum dilations to be
co-optimized for specific applications, algorithms, and hardware.

As concrete demonstrations, we prove that a simple
finite-difference dilation in a finite interval attains near-optimal oracle complexity, and we also construct a Bargmann–Fock dilation for
continuous-variable platforms with a single bosonic mode.  Numerical experiments on Maxwell viscoelastic wave propagation confirm the
accuracy and robustness of the approach.

\end{abstract}

\maketitle

\section{Introduction}

Large–scale simulations based on ordinary and partial differential
equations (ODEs/PDEs) are indispensable to modern, computation-driven
physical design and prediction.  
Their principal bottleneck is \emph{dimension}: realistic models easily
involve a huge number of coupled degrees of freedom, making further physical
refinement prohibitively expensive on classical hardware.
One class of differential equations that can be simulated on quantum computers with a promised exponential speedup is the time–dependent Schr\"odinger equations
(TDSEs), for which many efficient algorithms, known as Hamiltonian simulation algorithms,   have been developed with optimal scaling in precision and simulation time \cite{LC16,low2018hamiltonian,gilyen2019quantum}. 

Most differential equations of practical interest, however, are
\emph{non-unitary}: relaxation, dissipation or gain render the
generator $L$ non-Hermitian on its Hilbert space $\ch$.  
To harness Hamiltonian-simulation techniques, one must therefore
\emph{dilate} the flow, i.e.\ embed it into a larger Hilbert space
on which the evolution becomes unitary.
A natural starting point is the decomposition
$L=-iH+K$ with $H=H^{\dagger}$ and $K=K^{\dagger}$.
Two pioneering dilations are the Schrödingerization method \cite{PhysRevLett.133.230602,jin2023quantum}, which uses a warped-phase ancillary  coordinate, and the linear combination of Hamiltonian simulations (LCHS) \cite{ALL23}, which 
realizes the propagator as a Fourier integral. 

Both schemes deliver an \emph{exact} embedding, and after a suitable
numerical discretization, inheriting the exponential speedup in Hilbert-space
dimension together with near–optimal scaling in the evolution time and
target precision~$\epsilon$ \cite{ACL23,jin2025schr}.
Yet they rely on very different ingredients in the implementation.
A recent note \cite{hu2025dilation} further points out that Schrödingerization can be
interpreted through the lens of the general
Sz.–Nagy dilation theorem 
\cite{nagy2010harmonic} as a specific realization.
The confluence of these rapid developments, and the clear momentum behind their initial applications \cite{jin2024quantum,bharadwaj2025compact,jin2024quantum-interf,hu2024quantum}  brings forth three pressing questions into sharp focus: \textbf{(1)} \emph{Is there a unifying abstraction that embraces
      \emph{both} existing dilations \textbf{and} makes explicit the
      freedom in choosing the ancillary space and its coupling with the system Hilbert space $\ch$?}
\textbf{(2)} \emph{Which structural criteria on a dilation
      guarantee an \emph{exact} embedding of the non-unitary dynamics?}
\textbf{(3)} \emph{Can those criteria be leveraged to
      \emph{systematically} generate new, tunable families of dilations
      tailored to specific algorithms and hardware—
      from digital qubits to continuous-variable photonics
      and neutral-atom platforms?}

The present work answers all three questions in the affirmative. Specifically, we present a  general framework for treating any linear PDE or ODE system with generator $L(t)$, where we first streamline the dilation into an Encode-Evolve-Evaluate  pipeline:
\[
   \underbrace{\brr{l} \otimes I \bigr)}_{\text{\bf E}valuate} 
         \underbrace{\ct e^{\displaystyle  -i   \int_{0}^{t}\bigl(I_A \otimes H(s) +i F \otimes K(s)\bigr)ds }
       }_{\text{\bf E}volve} 
   \underbrace{\bigl(\kee{r} \otimes I }_{\text{\bf E}ncode}   \longrightarrow    \ct e^{\int_0^t L(s)ds}  
\]
where $I_A$ and $F$  act  on an ancillary  Hilbert space
$\mathcal H_A$. $F$ is skew Hermitian, thus making the dilated operator $I_A \otimes H(s) +i F \otimes K(s)$ Hermitian on $\ch_A \otimes \ch$ .
Meanwhile 
$\kee{r}\in\mathbb X$ with $\mathbb X$ being a larger function space than $\mathcal H_A$, but approximitable by $\ch_A$. The notation $\kee{\cdot}$ indicates that the state might not be normalized. Similarly, 
$\brr{l}$ is a linear functional on $\mathbb X$. This procedure is inspired by the Schrödingerization method—formulated with warped-phase coordinates and a Fourier transform \cite{PhysRevLett.133.230602} or in the qumode-quadrature basis \cite{jin2024analog}. By setting up a general framework that does not presuppose a particular ancilla space, we preserve the flexibility to devise alternative dilation schemes, as demonstrated below.

\smallskip 

The main contribution of this paper can be summarized as follows
\begin{enumerate}
    \item We first show that the dilation is exact, provided that the following moment-matching condition is satisfied:
\(
   \brr{l} F^{k}\kee{r}=1 (k\ge0),
\)
thus providing a set of transparent conditions.

\item We demonstrate that the moment framework immediately reproduces the two
classical dilations (Schr\"odingerization and LCHS) and
generates many more. We list five of them in \cref{tab:dilation-families}. Each of these choices is expressed as a family of moment-fulfilling dilation schemes with a parameter $\theta$ to optimize the simulation complexity. 

\begin{table}[htp]
\centering
\renewcommand{\arraystretch}{1.25}
\begin{tabular}{@{}l l c c c@{}}
\hline\hline
Space $H_A$ & Dilation operators & Generator $F$ & Right vector $\kee{r}$ & Evaluation $\brr{l}$ \\ \hline
 $H^1(0,1)\quad $  & Differential  &
$\displaystyle \theta   \left(p\partial_{p}+\tfrac12\right)$ &
$\displaystyle p^{1/\theta-1/2}$ &
$  \displaystyle \brr{l}f= 2^{1/\theta-1/2}  f(\tfrac12)  $ \\[2pt]

$L^2(\mathbb R)$  &Integral &
$\displaystyle (Ff)(p)=      \int_{\mathbb R}    p  e^{-\theta|p-q|}f(q)  dq \quad $ &
$e^{a(\theta)p}$  &
$\displaystyle \brr{l}f=f(0)$ \\[4pt]

$L^2(\mathbb R)$  &Pseudo-differential &
$\displaystyle -i  (-\Delta)^{\theta}$ &
$e^{i\xi_{0}x}$, $ \xi_{0}=e^{i\pi/(4\theta)}$ &
$\displaystyle \brr{l}f=f(0)$ \\[4pt]

$\mathscr B$  & Bargmann–Fock &
$\displaystyle \theta\bigl(a^{\dagger}-a\bigr)$ &
$\displaystyle \exp      \Bigl(\frac{z^{2}}{2}-\frac{z}{\theta}\Bigr)$ &
$\displaystyle \brr{l}f=f(0)$ \\ 

$\ell^2$  &  Difference &
$\displaystyle (F f)_n= \theta (f_n - f_{n-1}) $ &
$\{\lambda_\theta^n\}_{n\geq 0},  \lambda=\dfrac{\theta}{1+\theta} $   &
$\displaystyle \brr{l}f=f_0$ \\[4pt]

\hline\hline
\end{tabular}
\caption{New one–parameter families of exact moment-dilations.
The free parameter $\theta$ 
tunes both the ancillary  generator $F$ and the vector
$\kee{r}$; the moment conditions \cref{moms} are
satisfied in every case.}
\label{tab:dilation-families}
\end{table}

\item We analyze the dilation scheme based on the differential operator on a compact domain $F=(p\partial_{p}+\tfrac12)$ on $(0,1)$. Using simple finite-difference discretizations, we show how near-optimal quantum complexity can be easily achieved.  

\item Using $F=a^{\dagger}-a$, we also construct the $\brr{l}$ and $\kee{r}$ pair that fulfills the moment conditions. By working with Bargmann–Fock  space, we demonstrate the feasibility of the dilation scheme for continuous variable quantum computing platforms. 
\end{enumerate}

A direct consequence of our general dilation scheme is a \emph{universal approximation property} for solution trajectories of arbitrary linear ODEs. Specifically, for any linear PDE/ODE $\dot{\bm x}(t)=L(t)\bm x(t)$ and any horizon $T>0$, there exist a unitary flow $U(t)$ on a dilated Hilbert space and fixed boundary vectors $\bra{l}$, $\ket{r}$ such that
\(
(\bra{l} \otimes I) U(t) (\ket{r} \otimes \bm x_0)
\)
approximates $\bm x(t)$ uniformly on $[0,T]$ to arbitrary accuracy. This is achieved by selecting a sequence $\ket{r^{(n)}} \in \mathcal H_A$ converging to the target continuous profile and the corresponding duals $\bra{l^{(n)}}$, so that the projected unitary trajectories converge to the ODE solution. Notably, the construction is agnostic to spectral or contractive properties of $L(t)$: it applies equally to dissipative, conservative, and transiently or persistently unstable dynamics.

\smallskip

The framework thus bridges current algorithms,
offers a catalogue of ready-to-use dilation, and suggests many more yet to
be explored by the quantum-algorithm community. Notably, beyond dilation schemes, most of the quantum algorithms have been based on quantum linear solvers \cite{Ber14,BCOW17,BC22,childs2020quantum,krovi2023improved,dong2025quantum,wu2024structure}, which also produces a quantum state proportional to the solution $\bm x(T),$ but requires significantly more state preparation of $\ket{\bm x_0}$. Dilation schemes can be viewed as a time-marching scheme, but unlike the method \cite{FLT22}, which requires frequent amplitude amplification, dilation schemes wrap the dynamics into a unitary one, and the amplitude amplification is only applied at the final time.     

As a specific example, we will analyze the dilation scheme based on a differential operator, which after the spatial discretization, leads to a tight-binding model on the ancilla with nearest-neighbor hoppings. We show that the resulting quantum algorithm exhibits near-optimal complexity in all parameters. 

\begin{theorem}[Informal main result]
Let $\dot{\bm x}(t)=(-iH(t)+K(t))\bm x(t)$ with $H(t)=H(t)^\dagger$ and $K(t)=K(t)^\dagger$ (allowing gain).  
Using the differential generator $F=p\partial_p+\tfrac12$ and a moment-matching ancilla triple $(F,\ket{r},\bra{l})$, we construct an exact unitary dilation
$\widetilde H(t)=I_A\!\otimes H(t)+i\,\theta F\otimes K(t)$ whose ancilla readout recovers $\bm x(t)$.
After discretization, $F$ becomes a nearest-neighbor tight-binding chain $F_h$, giving a 2-local qubit implementation and a light-cone suppression of boundary error.  
With a segment-wise evolution and oblivious amplitude amplification, this yields a near-optimal query complexity
$\widetilde{\mathcal O}\!\big(T(H_{\max}+K_{\max})\,\|\bm x(0)\|/\|\bm x(T)\|\big)$ (up to standard polylog factors).
In the autonomous case ($H(t)\equiv H$, $K(t)\equiv K$), combining segmentation with optimal Hamiltonian simulation yields only an \emph{additive} logarithmic dependence on precision (of size $\mathcal O(\|K\|T\log(\|K\|T/\epsilon))$), and recovers the optimal Hamiltonian-simulation scaling as $\|K\|\to 0$.
\end{theorem}

The remainder of the paper is organized as follows.
In \cref{sec:setup} we introduce the general moment-fulfilling dilation framework and survey the broad landscape of admissible triples $\bigl(F,\kee{r},\brr{l}\bigr)$.
\cref{sec:compact-dilation} specializes the framework to a finite-interval, finite-difference dilation that attains nearly optimal scaling in the precision $\epsilon$, evolution time $T$, and oracle queries to $L(t)$; the section concludes with a numerical demonstration on Maxwell viscoelastic waves.
Finally, \cref{sec:bargmann} presents a Bargmann–Fock realization, showcasing that  the scheme is amenable to continuous-variable quantum hardware.

\section{Problem setup and mathematical framework}
\label{sec:setup}

We want to simulate the linear systems of $N$-dimensional ordinary differential equations, or partial differential equations after spatial discretization,
\begin{equation}\label{eq:phys-ode}
   \dt \bm x(t) =   L(t) \bm x(t), \bm x(0)=\bm x_{0}\in\mathbb C^{N}.
\end{equation}
Here $L \in \mathbb C^{N\times N} $. Due to the linearity, one can assume that $\norm{\bm x_0}=1$ without loss of generality.  The goal of a differential equation solver is to approximate $\bm x(t)= \ct e^{\int_0^t L(s)ds} \bm x_0.$

Similar to \cite{PhysRevLett.133.230602,jin2023quantum,ALL23},  our approach begins with a partition of $L$
\begin{equation}\label{A2HK}
    L = -i H + K,  
\end{equation}
with Hermitian matrices $H$ and $K$.  Due to the matrix $K$, the dynamics of \cref{eq:phys-ode} is non-unitary, which can not be directly treated with Hamiltonian simulation techniques.  

We focus on dilation approaches that turn~\eqref{eq:phys-ode} into a
Schr\"odinger equation on an enlarged Hilbert space and is organized in
three, as an \textbf{E–E–E} pipeline  
(Encode $\to$ Evolve $\to$ Evaluate):
\begin{enumerate}
    \item 
an {\bf E}ncoding step that maps the initial condition in \cref{eq:phys-ode} to a quantum state: $\ket{\Psi(0)} = \kee{r} \ket{\bm x_0} $. Here we use the notation  $\kee{r}$ to highlight a vector that is not in the Hilbert space. 
    \item an {\bf E}volving step, where we simulate a unitary dynamics 
    \begin{equation}\label{unitary}
        \ket{\Psi(t)} = \displaystyle \ct e^{-i \int_0^t \widetilde{H}(s)ds}\ket{\Psi(0)},
    \end{equation}
    with a dilated Hamiltonian,
   \begin{equation}\label{eq: diH}
       \widetilde{H}(t):=   I_A \otimes H(t)
         +
         i F \otimes K(t), 
   \end{equation} 
   with some appropriate skew-Hermitian operator $F$ in a Hilbert space $\ch_A$. 
   \item an {\bf E}valuating step that recovers the solution of the original non-unitary dynamics in \cref{eq:phys-ode}: $\bm x(t) =  \brr{l} \otimes I  \ket{\Psi(t)},$ with a suitable linear functional $\brr{l}$. 
          
\end{enumerate}

These three steps can be concisely summarized in the following identity,
\begin{equation}\label{eq: schrodingerization0}
           \ct e^{\int_0^t L(s)ds} = \big( \brr{l} \otimes I  \big)  
       \ct e^{-i \int_0^t\widetilde H(s) ds}   
       \big( \kee{r} \otimes I \big), 
        \qquad
        \forall  t\ge 0.
\end{equation}
Here $I$ is the identity operator on the original space $\ch.$

To formulate a route that \emph{exactly}  embeds the non-unitary dynamics into the unitary dynamics \cref{unitary}, it is necessary to place it into an infinite-dimensional function space. Specifically,  we let $\mathbb{X} $ be a metric function space with $\kee{r} \in \mathbb{X}$, and we let $\mathcal H_A$ be the ancilla Hilbert space:  $F \in \ch$. We require that $\mathcal H_A  \subset \mathbb{X}$ but $\ch_A$ is dense in $\mathbb{X}$ so that $F$ can be extended to functions in $\mathbb{X}$. Finally, $\brr{l}$ is a  linear functional on $\mathbb{X}$.

\begin{theorem}[Moment-fulfilling and Exact Dilation]\label{thm:moment-recovery}
   Let $L(t)=-i H(t)+K(t) $ be split as in \cref{A2HK}. 
   Then the dilated evolution and recovery
   \emph{exactly} reproduces the physical solution as in \eqref{eq: schrodingerization0},
    provided that the following moment conditions are fulfilled,
\begin{equation}\label{moms}
    \brr{l} F^k \kee{r} =1, \quad \forall k\geq 0. 
\end{equation}
\end{theorem}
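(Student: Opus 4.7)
The plan is to expand both sides of \cref{eq: schrodingerization0} as Dyson series and verify that they agree term by term once the moment identity \cref{moms} is invoked. First I would rewrite
\[
   -i\,\widetilde H(s) \;=\; -i\,(I_A\otimes H(s)) \;+\; F\otimes K(s),
\]
so that the time-ordered exponential of $-i\widetilde H$ admits the Dyson expansion whose $n$-th term is an $n$-fold simplex integral of the ordered product $\prod_{j=n}^{1}\bigl(-i\widetilde H(s_j)\bigr)$. Distributing each of these $n$ factors produces a sum indexed by $\epsilon=(\epsilon_1,\ldots,\epsilon_n)\in\{0,1\}^n$: slot $j$ contributes the ancilla operator $G_{\epsilon_j}$ with $G_0=I_A$, $G_1=F$, tensored with the system operator $M_{\epsilon_j}(s_j)$ with $M_0(s)=-iH(s)$, $M_1(s)=K(s)$.

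The decisive simplification is that $I_A$ commutes with $F$, so the ancilla factor in each term collapses to $F^{|\epsilon|}$, where $|\epsilon|=\sum_j\epsilon_j$. Sandwiching between $\brr{l}\otimes I$ and $\kee{r}\otimes I$ isolates the scalar moment $\brr{l}F^{|\epsilon|}\kee{r}$, which by hypothesis equals $1$ for every $\epsilon$. What remains on the system side is $\sum_{\epsilon\in\{0,1\}^n} M_{\epsilon_n}(s_n)\cdots M_{\epsilon_1}(s_1)$, and distributing this sum position by position yields $\prod_{j=n}^{1}\bigl(M_0(s_j)+M_1(s_j)\bigr)=A(s_n)\cdots A(s_1)$ in the correct time-ordered product. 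Reinserting the simplex integrals and summing over $n$ reproduces exactly the Dyson series for $\ct e^{\int_0^t A(s)\,ds}$, which proves \cref{eq: schrodingerization0}. A pleasant by-product is that the moment conditions are also \emph{necessary}: comparing the $n=k$ coefficient with $A\equiv K$ (so $H=0$) shows that any deviation $\brr{l}F^{k}\kee{r}\neq 1$ would already spoil the first-order term in $K$.

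The main obstacle I anticipate is not the combinatorial identification but the analytic justification when the ancilla generator $F$ is unbounded, as is the case in most rows of \cref{tab:dilation-families}. One must check that $\kee{r}$ lies in the domain of $F^k$ for every $k\ge 0$ so that the moments are even defined; that $\brr{l}$ extends continuously enough to evaluate the resulting vectors in $\mathbb X$; and that the interchange of the infinite sum over $n$, the finite sum over $\epsilon$, and the nested time integrals is legitimate. For bounded $F$, and in particular for any finite-dimensional truncation, these concerns vanish because absolute convergence is immediate. In the unbounded case I would argue that $\kee{r}$ is an $F$-analytic vector and combine uniform operator bounds on $H(s)$ and $K(s)$ over the compact interval $[0,t]$ with dominated convergence to pass the identity to the limit. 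Each concrete construction in \cref{tab:dilation-families} satisfies these analytic prerequisites, so the same formal argument applies throughout.
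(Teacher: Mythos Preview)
Your proposal is correct and follows essentially the same route as the paper: both expand the time-ordered exponential as a Dyson series, distribute each factor $I_A\otimes(-iH)+F\otimes K$, and use the moment identities to collapse the ancilla contribution to $1$, leaving exactly the Dyson series for $A(t)$. Your write-up is more explicit in tracking the $\{0,1\}^n$ expansion and adds a discussion of the analytic issues (domains of $F^k$, interchange of limits) that the paper's proof simply glosses over.
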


\begin{proof}
 For autonomous ODEs, $A$ is time-independent.  We can express the exponential as a Taylor series, each term can be expanded with powers: $\bigl( I_A \otimes H+ i F\otimes K \bigr)^k$. When taking the moments of each term, $\brr{l}$ and $\kee{r}$ will only act on $F$, which then is reduced to the Taylor expansion of the matrix exponential of $-iH+K$, and thus completes the proof. 

    To prove the time-dependent case, we can resort to the time-ordered evolution operator, which involves the time-ordered integrals of the following form,
    \[
   \brr{l}   \bigl( I_A \otimes   H(t_1)
         +
         i  F \otimes K(t_1)  \bigr) 
          \bigl(  I_A \otimes   H(t_2)
         +
         i F \otimes K(t_2)   \bigr) 
         \cdots 
          \bigl( I_A \otimes   H(t_k)
         +
         i F \otimes K(t_k)  \bigr) \kee{r}.
    \]
Multiplying the operators and using the fact that $\brr{l} I_A \kee{r} =1$ and the properties in \cref{moms}, we find that the product is reduced to,
\[
  \bigl( H(t_1) +  i K(t_1)   \bigr) 
          \bigl( H(t_2)  +
         i K(t_2)  \bigr) 
         \cdots 
          \bigl( H(t_k) +
         i K(t_k)   \bigr),
\]
leading exactly to the Dyson series expansion for the time-ordered evolution operator associated with $L(t)$.

\end{proof}

 It is crucial to point out that  working entirely in a Hilbert space will never fulfill the moment conditions \cref{moms}.  
If both $\brr{l}, \kee{r} \in \ch_A$, then one can see the following contradiction:
the right hand side of \cref{eq: schrodingerization0} is bounded for all $t$, whereas the  left hand side has no such guarantee.   It is precisely this use of states outside the ancilla Hilbert space that grants the dilation its power. One important implication of the current result is that the trajectories from \eqref{eq: schrodingerization0}  are dense in the solution space of any linear ODE system, including those with transient or persistent solution growth, thus achieving a theoretical universal approximation theorem using time-dependent Schr\"odinger equations. This can be seen by choosing an approximation of $\kee{r}$ in $\ch$, with which the dynamics generated by $\widetilde H(t)$ is unitary.

\smallskip 
A simple recipe to fulfill the moment conditions \cref{moms} is to ensure $F\kee{r}=\kee{r}$ and $( l|r)=1$. Since $F$ is skew Hermitian, and thus has purely imaginary eigenvalues in the Hilbert space $\ch_A$, it is again necessary that $\kee{r}$ is an eigenfunction outside $\ch_A$.

\begin{corollary}\label{cor-similarity}
The moment conditions \eqref{moms} are invariant under any similarity transformation. Namely, for any invertible map $S$, $\brr{l}S^{-1}$, $SFS^{-1}$ and $S\kee{r}$ also fulfill the moment conditions, and thus lead to exact dilations as well.    By choosing $S$ to be a unitary operator, $SFS^{-1}$ remains skew Hermitian.
\end{corollary}
In addition to the variations of the exact dilation scheme with similarity transformations, we will show five examples below, and two more examples in the next section.

\begin{example}[Schr\"odingerization via warped–phase transformation \cite{PhysRevLett.133.230602}]
The Schr\"odingerisation approach can be associated with an ancillary  Hilbert space.
\[
   \ch_A  =  \bigl\{f, f' \in L^{2}((0,\infty),dp)\ \bigm|\ f(0)=0 \bigr\},
   \qquad
\]
with inner product $\langle f,g\rangle:=\int_{0}^{\infty}f(p)^* g(p)  dp.$
In addition, the triple is selected as follows, 
\begin{equation}
    F := -\tfrac{d}{dp}, \quad \kee{r}=e^{-p}, \quad \brr{l} f(p) \mapsto e^{p^\ast}f(p^\ast).
\end{equation}
Here $p^\ast>0 $ is arbitrary.  Another choice of the evaluation is $\brr{l} f(p) \mapsto \int_0^{+\infty} f(p) dp$ \cite{jin2024quantum}.
It is clear that $\kee{r} \notin \ch_A$, due to the violation of the boundary condition. In this case, the function space  $\mathbb X$ can be chosen to be $C_0(0,\infty)$ to contain $\kee{r}$.  In addition, simple calculations verify that the exact moment conditions $\brr{l}F^{k}\kee{r}=1, k=0,1,2,\dots$ are fulfilled. 

\end{example}

\begin{example}[Linear combination Hamiltonian simulations (LCHS) {\cite{ALL23}}]
LCHS expresses the solution of \cref{eq:phys-ode} as an integral with the Lorentzian kernel. To put it into the framework of  \eqref{eq: schrodingerization0}, we can choose the ancillary  Hilbert space to be the real Schwartz class 
endowed with the usual $L^{2}$ inner product,  
\[
   \ch_{A} = \mathcal S(\mathbb R),
   \qquad     
   \langle f,g\rangle=\int_{-\infty}^{\infty}   
       {f(k)^*}  g(k)  dk .
\]
Multiplication by~$ik$ leaves $\ch_A$ invariant, hence we may take the
skew-Hermitian ancillary generator  
\(F  =  ik. \) In addition, we set
\[
   \kee{r}(k)=\frac{-i}{k-i},
   \qquad
   \brr{l} f  = 
   \frac{i}{\pi }\int_{-\infty}^{\infty} \frac{f(k)}{k+i}  dk,
\]
thus giving the LCHS formula,
 \[
\ct e^{ \int_0^t L(s) ds } = \frac{1}{\pi } \int_{\mathbb R} \ct e^{-i \int_0^t H(s) + k K(s)  ds   }  \frac{1}{k^2+1} dk. 
 \]
Although $\kee{r}\notin\mathcal S$,  it still decays algebraically and it can be contained in the tempered-distribution space  $ \mathbb X := \mathcal S'(\mathbb R).$
The moment condition \cref{moms} is then interpreted via contour integrals 
in the complex $k$–plane (with contributions from the simple poles at $k= i$) 
and can be verified using Cauchy’s residue theorem. 
LCHS has been improved to near-optimal query
complexity \cite{ACL23,pocrnic2025constant,huang2025fourier,low2025optimal} and extended to infinite dimensions \cite{lu2025infinite}.
\end{example}

The LCHS has an interesting connection to the Schr\"odingerization method in the previous example.  Let \(\mathcal F\) denote Fourier transform, one can apply the inverse Fourier transform to the triple, and restrict to the half–line \(p>0\): $\bigl(\mathcal F^{-1}\kee{r}\bigr)(p)  = e^{-p}1_{(0,\infty)}(p). $ Similarly, one can show that $\mathcal F^{-1} \circ F \circ \mathcal F= - \frac{d}{dp}, $ and $\brr{l}\mathcal Ff = e^{p^{\ast}}  
         f(p^{\ast})$. 
Thus in the dual \(p\)-representation the triple coincides with the
warped-phase Schrödingerization example.

\medskip 

 Another natural choice for constructing a Hilbert space is by using integral operators. 
 
\begin{example}[Integral–kernel dilation]\label{ex:int}
For a scale $\theta>0$ define the odd kernel
\(
   K_\theta(p)=p  e^{-\theta|p|}.
\)
Its convolution
\( (F_\theta f)(p)=\int_{\mathbb R}K_\theta(p-q)f(q)  dq \)
is skew-Hermitian on $\ch_A=L^{2}(\mathbb R)$.
Meanwhile, for $a\in(-\theta,0)$ one finds  
\(
   F_\theta e^{ax}=\lambda_\theta(a)e^{ax}, 
   \lambda_\theta(a)=-4\theta a/(\theta^{2}-a^{2})^{2}.
\)
Choosing $a$ so that $\lambda_\theta(a)=1$ (e.g.\ $\theta=1\Rightarrow a\approx-0.2253$) gives  
\[
   \ket{r}(x)=e^{ax},\qquad   
   F_\theta\kee{r}=\kee{r}.
\]
By taking $\bra{l}f=f(0)$,   the moment
identities are exactly satisfied.  

Although $\kee{r}\notin\ch_A$, it lies in the weighted space  
\(
   \mathbb X=L^{2}   \bigl(e^{-2\gamma|x|}, dx\bigr)
\)
for any $\gamma\in(-a,\theta)$, where  $\ch_A$ is dense.  Discretising $F$ via Nyström
quadrature yields a narrow-band matrix that might be amenable to 
block-encoding.
\end{example}

Another convenient way to generate skew operators is through
{pseudodifferential} symbols \(a(x,\xi)\):
\[
   \bigl(\Op(a)f\bigr)(x)
   =\frac{1}{2\pi}   
     \int_{\mathbb R}   \int_{\mathbb R}
     e^{  i(x-y)\xi}  
     a(x,\xi)  f(y)  dy  d\xi .
\]

\begin{example}[Dilation via a pseudodifferential generator]\label{ex-ps}
Let the ancillary  Hilbert space be $\ch_A=L^{2}(\mathbb R,dx)$.
Fix an order $0<\theta\le 1$ and define the pseudodifferential generator
\[
   F:=  -  i  (-\Delta)^{\theta},
   \qquad
   \bigl[\widehat{(-\Delta)^{\theta}f}  \bigr](\xi)=\left(\xi^2\right)^{\theta}  \hat f(\xi).
\]
Because $(-\Delta)^{\theta}$ is self-adjoint and positive, the factor
$-i$ makes $F$ skew-Hermitian on~$\ch_A$.
A natural choice for  $\kee{r}$ is the plane wave 
\(
   \kee{r}(x)=e^{i\xi_{0}x}, 
\)
with a complex frequency $\xi_0$ so that $\kee{r} \notin \ch_A$
and the evaluation functional
\(
   \bra{l}f = f(0).
\)
By choosing $\xi_0$ such that $\xi_0^{2\theta}=i$, we have
\(
   F \kee{r}= -i\xi_{0}^{2\theta} \kee{r} = \kee{r}.
\)
Thus all the moment conditions in \cref{moms} are satisfied. 
 Selecting $\arg(\xi_{0}^{2})= \pi/2$ then forces $(\xi_{0}^{2})^{\theta}=e^{  i\pi/2}=i$, i.e.,
\(\xi_{0}=e^{i\pi/(4\theta)}\) from the principal banch.

\end{example}

\begin{example}[Dilations using difference operators]
 Consider a class of difference operators, applied to the space of infinite sequences $\ch_A= \ell^2$, with each $f\in \ell^2$ being $f= (0, f_1, \cdots )$ equipped with the inner product: 
 $\langle f, g\rangle = \sum_{n \geq 1} f_n^* g_n.$ We can define a difference operator $\theta F$ as, $(\theta F f)_0=0$ and  $(\theta F f)_n= \theta( f_{n} - f_{n-1}) $ when $n\geq 1.$
 A telescopic (summation-by-parts) calculation gives
\(F_\theta^\dagger=-F_\theta\) on \(\mathcal H_A\).   In addition, one can verify that with $\kee{r}=\{\lambda_\theta^{  n}\}_{n\ge0},
   \lambda_\theta=\tfrac{\theta}{1+\theta}\in(0,1), $ and $\brr{l} f= f_0$ or $\brr{l} f= \lambda_\theta^{-1} f_1 $ the moment conditions \eqref{moms} are satisfied. 
\end{example}

These examples show the overwhelming possibilities of achieving the exact dilations, in addition to the dilation scheme created by a similarity transformation among $\brr{l}$, $F$, and $\kee{r}$. This wide variety of such dilation methods forms a large platform where different quantum algorithms can be invoked. 
For example,  the dilated Hamiltonian in \cref{ex-ps} and \cref{ex:int}  might be efficiently implemented using the block-encoding techniques in   \cite{li2023efficient,nguyen2022block}.

\smallskip 
In the next two sections, we will closely examine two approaches, one based on a first-order differential equation, for which a finite-difference approximation leads to simple implementation on digital quantum computers,  the other based on Bargman representation of a boson environment \cite{hall2013quantum}, which is more suited for continuous variable quantum platforms.

\section{A dilation using a function space on a compact domain}
\label{sec:compact-dilation}

The five examples in the previous section work with an ancilla Hilbert space in $\mathbb R$, $\mathbb R_{+}$ or $\mathbb N$, forcing one to truncate tails in a numerical implementation, thus introducing additional complications.  Here we present a model whose dilation acts on a \emph{finite} interval, so the dilation operator can be made uniformly banded for which the algorithmic complexity is easy to control.

\subsection{Numerical discretizations using SBP}
To find a dilation in \cref{eq: schrodingerization0}, we let
\[
  \mathcal H_A
  := H_0^{1}(0,1)
  = \bigl\{f\in L^{2}(0,1)\ \bigm|\ f'\in L^{2}(0,1),\ f(1)=0 \bigr\},
  \quad
  \langle f,g\rangle = \int_{0}^{1} f(p)^{*}g(p)  dp.
\]
Define the generator as the differential operator $F_\theta= \theta F$, and
\begin{equation}\label{eq:F-cont}
  F := p \partial_{p}+\tfrac12 .
\end{equation}

\begin{lemma}\label{lem:skew-H1}
  For every $\theta>0$ the operator $\theta F$ is skew-Hermitian on
  $\mathcal H_A$.  Moreover, the moment conditions
  of~\eqref{moms} are satisfied with
  \[
     \kee{r} = p^{\beta},
     \qquad
     \brr{l}  f = 2^{\beta} f \left(\tfrac12\right),
  \]
  where the exponent $\beta$ is
  \begin{equation}\label{eq:fromtheta2beta}
     \beta = \tfrac1\theta-\tfrac12.
  \end{equation}
\end{lemma}

\begin{proof}
    Skewness: for $f,g\in\mathcal H_A$, we examine the inner product denoted by $\langle, \rangle$, 
    \begin{equation}\label{eq:ibp}
      \langle f, Fg\rangle
      = \int_0^1 f^*(p g' + \tfrac12 g) dp
      =  p f^* g\Bigr|_0^1 - \langle Ff, g\rangle,
    \end{equation}
    and the boundary term vanishes since $p=0$ at the left endpoint and $f(1)=g(1)=0$ at $p=1$. Hence $F$ is skew-Hermitian and so is $\theta F$.
    For the moments, $F(p^\beta)=(\beta+\tfrac12)p^\beta$; with \eqref{eq:fromtheta2beta} we get
    $F_\theta\ket{r}=\theta(\beta+\tfrac12)\ket{r}=\ket{r}$. Also,
    $\bra{l}\ket{r}=2^{\beta} r(1/2)=2^{\beta}\cdot 2^{-\beta}=1$.
\end{proof}

Note $\ket{r} \notin H_0^{1}$ only because it fails the condition
$r(1)=0$, but $r\in C^{1}(0,1)$ and $r\in H^1(0,1)$ since $\beta>1/2$ when $\theta\in(0,1)$, so the dilation remains
well-posed.

\medskip

For a qubit implementation, we discretize $\{F,\bra{l},\ket{r}\}$ as follows.
Partition $[0,1]$ into $M$ intervals with grid points $\{p_i\}_{i=0}^M$ and set $v_i=v(p_i)$. The integration by parts that ensured the skew Hermitian property can be extended to the discrete level using summation by parts (SBP). Following \cite{Strand1994,MattssonNordstrom2004}, 
we let $h_j:=p_{j+1}-p_j$ and define the SBP trapezoid weights
\begin{equation}\label{sbp-wh}
W=\mathrm{diag}(w_0,\dots,w_M),\qquad
w_{0}=\tfrac12 h_0,\quad
w_{j}=\tfrac12(h_{j-1}+h_j)\ (1\le j\le M-1),\quad
w_{M}=\tfrac12 h_{M-1}.
\end{equation}
Let $Q$ be the tridiagonal matrix with
\begin{equation}\label{sbp-Q}
    (Q)_{j,j+1}=\tfrac12,\quad (Q)_{j+1,j}=-\tfrac12,\quad
Q_{00}=-\tfrac12,\quad Q_{MM}=+\tfrac12,
\end{equation}
and set $D:=W^{-1}Q$. Then the diagonal-norm SBP identity holds:
\[
WD+D^\dagger W
 = 
Q+Q^\dagger
 = 
B:=\mathrm{diag}(-1,0,\ldots,0,1).
\]
This implies that for all grid functions $\bm u, \bm v$,
$\ \langle \bm u,D \bm v\rangle_W+\langle D\bm u,\bm v\rangle_W = u_M v_M - u_0 v_0$ with weighted inner product $\langle \bm x,\bm y\rangle_W:=\bm x^\dag W \bm y$, thus mimicking the integration by parts property in \cref{eq:ibp}, and thus automatically maintain the skew property after the discretization. Specifically, let $P:=\mathrm{diag}(p_0,\ldots,p_M)$ and define the (Hamiltonian) split form
\begin{equation}\label{sbp-Fh}
    F_w  :=  \tfrac12(PD+DP) - \tfrac12 W^{-1}BP,
\qquad
F_h  :=  W^{1/2}F_w  W^{-1/2}.
\end{equation}
The SBP property automatically guarantees that $F_h$ is skew-Hermitian and tridiagonal, while $F_w$ is skew with respect to the $W$–inner product. On a uniform grid one has $h_i\equiv h:= 1/M$, $p_i=ih$, $i=0,\dots,M$, and $w_0=w_M=h/2$, $w_i =h, \forall 1\leq i \leq M-1 $.  The second-order SBP approximation of $F$ in \cref{eq:F-cont}  reads
\begin{equation}\label{eq:Fh}
  \begin{aligned}
    (F_h v)_0 &= \tfrac{1}{2\sqrt2}  v_{1},\\
    (F_h v)_1 &= \tfrac34  v_{2}-\tfrac1{2\sqrt2}  v_{0},\\
    (F_h v)_i &= \tfrac{p_{i+1}+p_i}{4h}  v_{i+1}
              - \tfrac{p_i+p_{i-1}}{4h}  v_{i-1},
              \quad 2\le i\le M-1,\\
    (F_h v)_M &= -\frac{p_{M-1}+p_M}{4h}  v_{M-1}.
  \end{aligned}
\end{equation}
The skew-Hermitian property is apparent.

With the approximation of $F$ by $F_h \in \mathbb{C}^{(M+1) \times (M+1)}$, we obtain a dilated Hamiltonian,
\begin{equation}\label{dilated-H}
    \widetilde H(t) :=  I\otimes H(t) +  i \theta F_h\otimes K(t).
\end{equation}

The implementation of $ \widetilde H(t)$ requires an appropriate representation of $F_h$ on quantum devices.  The SBP split form can be written as $F_h=a-a^\dagger$, where
 \(a:=\sum_{j=0}^{M-1} f_j  \ketbra{j}{j+1}\) 
 is a weighted lowering operator. In this sense \(iF_h\) plays the role of a discrete momentum/current operator on a tight-binding
 chain (a discrete derivative up to constants). With \(|r_h\rangle=W^{1/2}\mathbf 1\), the right vector is the mass-weighted flat  mode: for \(\theta=2\), the SBP closure gives
 \(\theta F_h|r_h\rangle=|r_h\rangle\) on all interior sites.

We can also associate $F_h$ with Pauli strings, which provide more alternatives to the implementation on digital quantum computers.
Using a unary (one–hot) encoding of the ancilla with computational basis \({\ket{j}}_{j=0}^{M}\), define \(\sigma_j^\pm=(X_j\mp iY_j)/2\). For the off-diagonal couplings \(f_j\), the split–form chain satisfies
\(
(F_h)_{j,j+1}=f_j,\qquad (F_h)_{j+1,j}=-f_j,
\)
hence
$
(iF_h)_{j,j+1}=i f_j,\qquad (iF_h)_{j+1,j}=-i f_j.
$
It admits the 2–local Pauli expansion
\[
F_h
=\sum_{j=0}^{M-1} f_j\bigl(\sigma_j^+\sigma_{j+1}^- -  \sigma_j^-\sigma_{j+1}^+\bigr),
\Longrightarrow
iF_h
= -\tfrac12\sum_{j=0}^{M-1} f_j\bigl(X_j Y_{j+1}-Y_j X_{j+1}\bigr).
\]

Meanwhile, for the initial state $\kee{r}$ (an eigenvector of $F$), we take the direct discretization with normalizing factor $Z_\beta$,
\begin{equation}\label{r_h}
   \kee{r_h}= \ket{r_h} = Z_{\beta}^{-1}
                \sum_{j=0}^{M} p_j^{\beta}  w_j^{1/2}  \ket{j}, 
    \qquad 
    Z_\beta = \Bigl(\sum_{j=0}^{M} w_j p_j^{2\beta}\Bigr)^{1/2} 
            = \co  \left( \frac{1}{\sqrt{2\beta+1}}  \right).
\end{equation}
Take
\(
  \brr{l_h} = \frac{1}{\braket{j_\ast}{r_h}}  \bra{j_\ast}, 
\)for some $0\leq j_\ast< M $.
Then $\big( l_h|r_h\big)=1$ by construction.

These elements constitute a simple procedure to simulate the ODEs \eqref{eq:phys-ode}: Starting with $\ket{r_h} \otimes \ket{\bm x_0}$, one applies Hamiltonian simulations with the dilated Hamiltonian \eqref{dilated-H} to time $T$, and then post-select at $j_\ast$ on the ancilla.  In the following sections, we will analyze the error due to the discretization and the success probability from the post-selection.

\medskip


\subsection{Error Estimation using a Light-cone Property}

Considering the simple choice $\beta=0$, i.e., $\theta=2$, then 
$\ket{r_h} = \sqrt{h/2} \ket{0} +\sqrt{h} \ket{1} + \cdots + \sqrt{h} \ket{M-1}  + \sqrt{h/2} \ket{M}$.
A direct calculation shows that
\(
\bra{j}  \theta F_h \ket{r_h}=\bra{j}\ket{r_h}
\) for all $j<M$, which is guaranteed by the SBP construction. Similarly, \(
\bra{j}  (\theta F_h)^2 \ket{r_h}=\bra{j}\ket{r_h}
\) for all $j<M-1$. 
Therefore,
\begin{equation}
    \brr{l_h} (\theta F_h)^k \kee{r_h} =1, \; \forall  k<M-j_\ast,
\end{equation}
which is an indication of the approximate property via a finite moment matching of \eqref{moms}.

\medskip

\medskip

One remarkable observation is that  by adding one diagonal entry to $F_h$,
\begin{equation}\label{alpha}
  \widehat{F}_h = F_h + \alpha \ketbra{M}{M}, \quad   \alpha
 = 
\frac{1}{\theta}
 - 
\frac{\bra{M}F_h\ket{r_h}}{\bra{M}\ket{r_h}} = \co(M).
\end{equation}
we obtain $\theta \widehat{F}_h\ket{r_h} = \ket{r_h}$, implying that the triple
\(
\bra{l_h}, \widehat{F}_h, \ket{r_h}
\)
satisfies moment conditions of \emph{all orders} with a finite-dimensional, almost skew-Hermitian operator $\widehat{F}_h$.
In light of \cref{thm:moment-recovery}, the dilation using a finite-dimensional non-Hermitian $\widehat{F}_h$ recovers the exact solution of the ODEs \eqref{eq:phys-ode}. We will refer to such correction as moment-locking closure (MLC). 

The benefit of MLC is two-fold. First, using $\widehat{F}_h$ provides a simple route to quantify the error by the dilated Hamiltonian without going to infinite dimensions. Although $\widehat{F}_h$ is non-Hermitian, it provides an anlysis device to quantify the error. Secondly, compared to the infinite-dimensional formulae, such as  Schrödingerization \cite{PhysRevLett.133.230602}, LCHS \cite{ALL23}, and \eqref{eq: schrodingerization0}, it provides an easier starting point to design successively improved approximations.   

To elaborate on the first point, we also define $\widehat H(t) :=  I\otimes H(t) +  i \theta \widehat F_h\otimes K(t)$, along with the dynamics they generate, 
\begin{align}
    i \frac{d}{dt}\Psi(t)& = \widetilde H(t) \Psi(t), \quad \Psi(0) = \ket{r_h} \otimes \bm x_0, \\
    i \frac{d}{dt}\Phi(t)& = \widehat H(t) \Phi(t), \quad \Phi(0) = \ket{r_h} \otimes \bm x_0.
\end{align}
Since  $\theta \widehat F_h$ has $\ket{r_h}$ as an eigenvector with eigenvalue 1, $\Phi(t)= \ket{r_h} \otimes \bm x(t)$.
Define the error $\chi(t):=\Phi(t)-\Psi(t)$. Subtracting the two evolutions yields
\[
i \frac{d}{dt}\chi(t) = \widetilde H(t) \chi(t) + i\theta \alpha \big(\ketbra{M}{M}\otimes K(t)\big) \Phi(t),
\quad \chi(0)=0.
\]
In the error equation, the source of the error clearly concentrates at the right boundary only in the ancilla. Since $F_h$ in $\widetilde{H}$ only has nonzero entries at off-diagonals $(j,j+1)$ and $(j+1,j)$, the error propagation to the left, especially the arrival at $j_\ast$, where we post-select the solution, can be estimated by a light-cone type of approach \cite{Lieb1972}, which in the context of the differential operator in \cref{eq:F-cont}, is known as the CFL condition \cite{leveque1992numerical}.     
  \cref{fig:cfl} shows a numerical demonstration of such back propagation of the boundary error using the ODE $\frac{d}{dt}\bm v= -\theta \kappa(t) F_h  \bm v$, where one sees the boundary perturbation confined to the characteristic cone emanating from $p=1$ with speed $-\theta p$; points outside that domain of dependence remain unchanged. The arrival time at a location $p_\ast$ scales like
  \begin{equation}\label{Tarrival}
      T(1 \to p_\ast) = -\ln p_\ast.
  \end{equation}
  See \cref{proof-thm1} and \cref{lem:fps-sharp} for a derivation. We now quantify such error propagation. 

\begin{figure}[htp]
\centering
\includegraphics[scale=0.5]{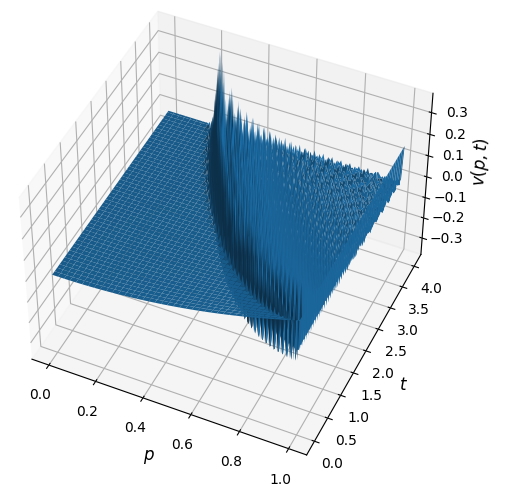}
\caption{Snapshots of the semi-discrete solution for
$\bm v' = -\theta F_h \bm v$ with
$v_j(0) \propto p_j^{1/\theta - 1/2}$.}
\label{fig:cfl}
\end{figure}

\begin{lemma}[Finite–speed propagation bound]\label{lem:cfl-chi}
Let $\theta=2$ and consider the uniform grid with grid size $h=1/M$.  Suppose $m=M-j_\ast$ and that 
\begin{equation}\label{a-CFL}
    \varrho \;:=\; \frac{2 e \theta K_{\max} T}{mh}<\frac12,
\end{equation}
then the boundary–induced error arriving at $p_\ast=j_\ast h$ obeys
\begin{equation}\label{eq:chi-final-bound-main}
\bigl\|(\bra{j_\ast} \otimes I) \chi(T)\bigr\|
 \;\le\;
2 \alpha \bigl|\braket{M}{r_h}\bigr|\;X(T) K_{\max}T\;\Bigl(\dfrac12\Bigr)^{ M - j_\ast},
\qquad
X(T):=\sup_{t\in[0,T]}\|\bm x(t)\|.
\end{equation}
Consequently, to guarantee $\bigl\|(\bra{j_\ast} \otimes I) \chi(T)\bigr\|\le\epsilon$, it suffices to choose
\begin{equation}\label{eq:two-reqs}
M - j_\ast \ge\; 
 \log_2 \frac{4 \alpha X(T) K_{\max}T}{\epsilon}.
\end{equation}
\end{lemma}

As a result, under the condition in \cref{a-CFL}, especially when $K_{\max}T = \co(1)$ we can make the geometric factor
smaller than $\epsilon$ by taking $M=\Omega \big(\log(K_{\max}T X(T)/\epsilon)\big)$ (Thus the number of ancilla has $\log\log(K_{\max}T X(T)/\epsilon)$ scaling). 
Since $F_h$ is $2$–sparse and $\|F_h\|_2=\Theta(1/h)=\Theta(M)$, by using the matrix norm bounds for sparse matrices,
the dilated Hamiltonian \eqref{dilated-H} obeys
\[
\norm{\widetilde{H}(t)}
 = 
\co \big(H_{\max} + \theta K_{\max} M\big)
 = 
\co \Big(H_{\max} + \theta K_{\max} \log \frac{K_{\max}T X(T)}{\epsilon}\Big),
\]
where $H_{\max}:=\max_{0\le t\le T}\|H(t)\|$.  Leveraging the time–dependent Hamiltonian simulations using Dyson series
\cite{low2018hamiltonian},  we arrive at,
\begin{theorem}[ODE simulations by dilation under the light-cone condition \eqref{a-CFL}]\label{thm:cfl-main} Assume that $K_{\max}T=\co(1),$ and
  assume block encodings of $H(t)$ and $K(t)$ with norms
  $H_{\max} $ and $  K_{\max}$, respectively. The dilation scheme outputs an $\epsilon$-accurate approximate state for $\ket{\bm x(t)}$ with
  success probability $\Omega(1)$ using
  \[
    \widetilde{\mathcal O}   \left(
     \frac{  \big(T H_{\max}+\log \frac{X(T)}{\epsilon} \big) \norm{\bm x(0)} }{\norm{\bm x(T)}}
    \right)
  \]
  block-encoding queries to $H(t)$ and $K(t)$, and a comparable number of gates,  where $\widetilde{\mathcal O}$ hides the $\frac{\log T H_{\max}/\epsilon }{ \log\log T H_{\max}/\epsilon}$ factor.   For autonomous ODEs where $K(t)$ and $H(t)$ are time-independent, the query complexity is reduced to,
   \[
    \co   \left(
       \big(T H_{\max}  +  \log \frac{X(T)}{\epsilon} + \log \frac{T H_{\max} }{\epsilon} \big) \frac{\norm{\bm x(0)}}{\norm{\bm x(T)}} \right).
     \] 
\end{theorem}

\subsection{Geometrically refined grids with an improved light-cone property}

The arrival-time estimate based on the PDEs generated by $F$ suggests that the propagation of the boundary effect slows down when it is approaching the left boundary $-\log p_\ast$. This suggests that we mitigate a large $K_{\max}T$ by refining only near the left boundary while
keeping the bulk step size moderate. We therefore adopt a \emph{geometric} grid
\begin{equation}\label{p-geom}
  p_j \;=\; \exp \Big[-\delta\Big(M-{j}\Big)\Big] ,\qquad j=0,1,\dots,M,
\end{equation}
with grading parameter $\delta>0$ ($\delta=1$ is enough for the discussions below). One can directly verify that the
(trapezoidal) diagonal quadrature weights are
\[
w_j=\begin{cases}
\dfrac{1}{2} (e^{\delta}-1) p_0, & j=0,\\[6pt]
\dfrac{p_{j+1}-p_{j-1}}{2}\;=\;p_j \sinh\delta, & 1\le j\le M-1,\\[6pt]
\dfrac{1}{2} (1-e^{-\delta}) p_M, & j=M~,
\end{cases}
\]
and the split–form skew tridiagonal off–diagonals
$f_j:=(F_h)_{j,j+1}$ in \cref{sbp-Fh} are uniform in the interior, which makes it easier to analyze and implement. Specifically, we have, 
\begin{equation}\label{eq:geom-const-w}
f_j=\frac{1}{4\sinh(\delta/2)}\times
\begin{cases}
\sqrt{ 1+e^{-\delta} }, & j=0,\\[4pt]
1, & 1\le j\le M-2,\\[4pt]
\sqrt{ 1+e^{\delta} }, & j=M-1~.
\end{cases}
\end{equation}
We consider $\delta \geq \delta_0$ with a constant $\delta_0$,  which implies that,
\begin{equation}\label{eq:geom-norm}
  \norm{F_h} \;\le\; 2 \max_{0\le j\le M-1} f_j
  \;=\; \frac{\sqrt{ 1+e^{\delta} }}{2 \sinh(\delta/2)} = \Theta(1).
\end{equation}

We now state the finite–propagation estimate on the geometric chain. The proof is shown in \cref{sec: CFL-geom}.

\begin{lemma}[Finite-speed propagation on a geometric chain]\label{lem:geom-CFL}
Let $F_h$ be the skew–Hermitian chain with  off–diagonals $f$ in 
\eqref{eq:geom-const-w} and $j_\ast \in [M-1]$. 
If
\begin{equation}\label{cfl-geom}
    \varrho:= \frac{e\theta K_{\max}T}{4 (M-j_\ast) \sinh(\tfrac{\delta}2)} <1,  
\end{equation} then the boundary influence decays geometrically. 
\begin{equation}\label{eq:geom-CFL-decay}
  \norm{\langle j_\ast| \otimes I  \chi(T) }
  \;\le\; \frac{\varrho^{ M-j_\ast}}{1-\varrho^2}.
\end{equation}
\end{lemma}

\noindent
The condition in \eqref{cfl-geom} is significantly better than \cref{a-CFL}. 
Choosing $\delta = \Theta(1)$ and $M=\Theta(K_{\max}T)$ is sufficient
to make the base $\varrho<1$, and an additional $\co(\log(1/\epsilon))$ points pushes the error below~$\epsilon$. Furthermore, one can choose a much larger $\delta$ to enhance this property, e.g., $\delta= \log(K_{\max} T).$ However, the downside of this approach is that the right vector $\braket{j}{r_h}\propto W^{1/2}_j $ has exponentially small components near the left boundary, thus leading to a small success probability. On the other hand, by choosing a large post-select window $[0, p_{j_\ast}]$, the small probability can be improved to $\Omega(1)$, but in this case, \cref{cfl-geom} is difficult to satisfy.

So for long-time simulation where $K_{\max} T \gg1$, we divide the simulation into segments, each with length $\tau \co(1/K_{\max})$. Upon the completion of the Hamiltonian evolution of each segment, one obtains an accurate solution in the window $[0, p_{j_\ast}]$, thanks to the light-cone condition in \cref{cfl-geom}. To transition to the next time segment, we simply apply oblivious amplitude amplifications (OAA) to restore the ancilla state back to $\ket{r_h}$.  

\begin{corollary}[Light-cone protected window with constant weight]\label{cor:geom-CFL-window}
Fix constants $\delta_0>0$ and $\Delta\in(0,1/2)$, and assume $\delta\ge\delta_0$.
Let $m:=M-j_\ast\ge 1$ and choose
\begin{equation}\label{eq:m-choice}
m \;\ge\; \max\!\left\{
\frac{1}{2\sinh(\delta/2)}\,,\;
\frac{1}{\delta}\log\!\left(\frac{1+e^\delta}{2\Delta}\right)
\right\}.
\end{equation}
If the segment length $\tau$ satisfies
\begin{equation}\label{eq:tau-choice}
\tau \;\le\; \frac{1}{e\,\theta\,K_{\max}},
\end{equation}
then for the readout site $j_\ast:=M-m$ we have:

(i) (Light-cone) the geometric-chain parameter obeys
\[
\varrho=\frac{e\theta K_{\max}\tau}{4m\sinh(\delta/2)} \;\le\; \frac12,
\]
so Lemma~\ref{lem:geom-CFL} applies at $j_\ast$ over time~$\tau$;

(ii) the window probability satisfies
\[
P_{\mathrm{win}}(j_\ast)
=\frac{\frac12(p_{j_\ast}+p_{j_\ast+1})-p_0}{1-p_0}
\;\ge\; \Delta - p_0,
\qquad p_0=e^{-\delta M}.
\]
Since $p_0$ is exponentially small with $M$, we can pick a safe lower bound $\Delta/2$ by choosing a sufficiently large $M$.

(iii) $\norm{F_h} = \co(1).$

\end{corollary}

\begin{proof}
(i) Using \eqref{eq:tau-choice} gives
\[
\varrho=\frac{e\theta K_{\max}\tau}{4m\sinh(\delta/2)}
\le \frac{1}{4m\sinh(\delta/2)}\le \frac12,
\]
by the first lower bound in \eqref{eq:m-choice}.

(ii) With $j_\ast=M-m$, we have $p_{j_\ast}=e^{-\delta m}$ and
$p_{j_\ast+1}=e^{-\delta(m-1)}$, hence
\[
P_{\mathrm{win}}(j_\ast)
=\frac{\frac12\bigl(e^{-\delta m}+e^{-\delta(m-1)}\bigr)-p_0}{1-p_0}
\ge \frac12\bigl(e^{-\delta m}+e^{-\delta(m-1)}\bigr)-p_0
= \frac{1+e^\delta}{2}e^{-\delta m}-p_0.
\]
The second lower bound in \eqref{eq:m-choice} ensures
$\frac{1+e^\delta}{2}e^{-\delta m}\ge \Delta$, giving the claim. 

(iii) This comes directly from the magnitude of $f_j$'s. 
\end{proof}

\subsection{Segmentation-wise simulation and oblivious amplitude amplification}

We describe a segmented algorithm for approximating the normalized solution.
Specifically, consider a partition $ [0,T] = \bigcup_k [t_k,t_{k+1}]$, where $t_k:=k\tau$, with $\tau$ selected as follows
\begin{equation}\label{seg-tau}
    \tau K_{\max}=\mathcal O(1). 
\end{equation}
We also let
\begin{equation}
    \mathcal U_k:=\mathcal T\exp\!\bigl(-i\int_{t_k}^{t_{k+1}}\widetilde H(s)\,ds\bigr)
\end{equation}
be the segment propagator on $[t_k,t_{k+1}]$. After each segment, we restore the ancilla back to $\ket{r_h}$ using OAA. The overall workflow is depicted in \cref{fig:circuit_segment}. 
\begin{figure}[h]
\centering
\begin{tikzpicture}[
    scale=1,
    operator/.style={draw, fill=blue!10, rectangle, minimum height=2.5cm, minimum width=2.5cm, align=center, rounded corners},
    restore/.style={draw, fill=green!10, rectangle, minimum height=1cm, minimum width=2cm, align=center},
    wire/.style={thick}
]

\coordinate (sys_in) at (0, 1);
\coordinate (anc_in) at (0, -1);

\node[left] at (sys_in) {System $|x(0)\rangle$};
\node[left] at (anc_in) {Ancilla $|r_h\rangle$};

\node[operator] (U1) at (3, 0) {\Large $U \!= \!\ct \!e^{-i \int \widetilde{H}ds }$};

\draw[wire] (sys_in) -- (U1.west |- sys_in);
\draw[wire] (anc_in) -- (U1.west |- anc_in);

\node[restore] (OAA) at (6.5, -1) {OAA\\Restore};

\draw[wire] (U1.east |- sys_in) -- (9, 1); 
\draw[wire] (U1.east |- anc_in) -- (OAA.west); 

\node[operator] (U2) at (10.5, 0) {\Large $U \!= \ct \! e^{-i \int  \widetilde{H}ds } $};

\draw[wire] (9, 1) -- (U2.west |- sys_in); 
\draw[wire] (OAA.east) -- (U2.west |- anc_in) node[midway, above, scale=0.8] {$|r_h\rangle$}; 

\draw[wire] (U2.east |- sys_in) -- (13, 1) node[right] {$|x(2\tau)\rangle$};
\draw[wire] (U2.east |- anc_in) -- (13, -1);

\draw[decorate, decoration={brace, amplitude=10pt, mirror}, thick, gray] (1.5, -2) -- (7.5, -2) node[midway, below=15pt] {Segment 1: Evolution \& Restoration};

\draw[decorate, decoration={brace, amplitude=10pt, mirror}, thick, gray] (9, -2) -- (12, -2) node[midway, below=15pt] {Segment 2: Repeat};

\end{tikzpicture}
\caption{Circuit diagram for the segmented simulation. The algorithm proceeds in time steps of duration $\tau$. In each segment, the joint evolution $U_k = \ct \exp(-i \int  \widetilde{H}ds )$ is applied. The system state evolves from $|x(t)\rangle$ to $|x(t+\tau)\rangle$ (conditioned on success), while the ancilla is restored to the reference state $|r_h\rangle$ via Oblivious Amplitude Amplification (OAA) before the next segment begins.}
\label{fig:circuit_segment}
\end{figure}

The light-cone property ensures that the prefront window $\mathrm{win}=\{0,1,\dots,j_\ast\}$ on the ancilla lattice is unaffected by the boundary error (up to an $\epsilon$ error), and retains the solution $  \bm x(t)$ within the $\mathrm{win}$. In particular, within the choice of \cref{seg-tau} for $\tau$, $P_\mathrm{win} : = \sum_{j \in \mathrm{win} } \abs{\braket{j}{r_h}}^2 $ is a constant lower bound thanks to \cref{cor:geom-CFL-window}, and without loss of generality, we may set $P_\mathrm{win} \geq 1/4.$ Let us define the corresponding
projector and the truncated
ancilla mode, respectively,
\begin{equation}\label{eq:PiW}
\Pi_{\mathrm{win}} \;:=\; \Big(\sum_{j\in \mathrm{win} }\ket{j}\bra{j}\Big)\otimes I, \quad \ket{r_{\rm win}}:=\sum_{j\in \mathrm{win} }\ket{j}\bra{j}\ket{r_h}.
\end{equation}

Write the (normalized) solution from the previous segment  $k$ as $\ket{\Psi_k}$.
The post-window state decomposes as
\begin{equation}\label{eq:good-bad-decomp}
\mathcal U_k\ket{\Psi_k}
\;=\;
\underbrace{\Pi_\mathrm{win}\mathcal U_k\ket{\Psi_k}}_{=:~\ket{G_k}}
\;+\;
\underbrace{(I-\Pi_\mathrm{win})\mathcal U_k\ket{\Psi_k}}_{=:~\ket{B_k}},
\qquad
\braket{G_k}{B_k}=0.
\end{equation}
The success probability of a direct window postselection is $p_k:=\|\ket{G_k}\|^2$.

Since the boundary-induced component that
reaches the window $\mathrm{win}$ during a single segment is exponentially small,  there exists an error vector $\ket{\bm e_k}$ supported on $\mathrm{win}$
such that
\begin{equation}\label{eq:window-structure}
\Pi_\mathrm{win} \mathcal U_k\ket{\Psi_k}
\;=\;
\ket{r_{\rm win}}\otimes\ket{\bm x(t_{k+1})}
\;+\;
\ket{\bm e_k},
\qquad
\|\bm e_k\|\le \epsilon.
\end{equation}

Equation \eqref{eq:window-structure}
formalizes an important observation that the unwanted boundary disturbance has \emph{negligible overlap} with
the ``good'' component supported on the window $\mathrm{win}$, and one can invoke oblivious amplitude amplification (OAA).  This requires reflections defined by $\Pi_\mathrm{win}$ and  $\ket{r_h}$, the ancilla reference used at the beginning of
each segment. Define the two reflections
\begin{equation}\label{eq:reflectors}
R_\mathrm{win} := I-2\Pi_\mathrm{win},
\qquad
R_{r} := 2(\ket{r_h}\bra{r_h}\otimes I)-I.
\end{equation}
Both are implementable with ancilla-only controls. The OAA applies 
iterations 
\begin{equation}\label{eq:grover-iterate}
Q_k \;:=\; -\,R_{r}\,\mathcal U_k^\dagger\,R_\mathrm{win}\,\mathcal U_k.
\end{equation}
In the two-dimensional invariant subspace spanned by $\ket{G_k}$ and $\ket{B_k}$,
$Q_k$ acts as a rotation that moves  $\ket{r_{\rm win}} $ to $\ket{r_h}$.

Operationally, in the subspace \(\mathrm{span}{\ket{G_k},\ket{B_k}}), (Q_k)\) performs a Grover rotation that amplifies the \(\Pi_{\mathrm{win}}\) (‘good’) component of \(\mathcal U_k\ket{\Psi_k}\); an additional ancilla-only unitary then restores the amplified window state \(\ket{r_{\rm win}}/|r_{\rm win}|\) to \(\ket{r_h}\).

OAA requires $\Theta\!\Big(\frac{1}{\sqrt{p_k}}\Big)$ rounds to increase $\sqrt{p_k}$ to 1.  In each segment, the raw success probability factors as
$p_k \approx P_{\rm win}(j_\ast)\,\|\bm x(t_{k+1})\|^2/\|\bm x(t_k)\|^2$
(up to the exponentially small light-cone leakage into $\mathrm{win}$), so the
OAA iteration count scales like $1/\sqrt{p_k}$.  By choosing $j_\ast$ (equivalently, the
window) so that $P_{\rm win}(j_\ast)\ge \Delta=\Omega(1)$, we ensure that the window
overlap contributes only a constant amplification overhead. Thus, with a renormalization after each segment,
the OAA overhead per segment scales like
\begin{equation}\label{eq:per-seg-oaa-factor}
R_k \;=\; \mathcal O\!\Big(\max\Big\{1,\frac{\|\bm x(t_k)\|}{\|\bm x(t_{k+1})\|}\Big\}\Big).
\end{equation}

For monotone dissipative dynamics ($\|\bm x(t)\|$ nonincreasing) the cumulative
amplification overhead is
\[
\prod_k \|\bm x(t_k)\|/\|\bm x(t_{k+1})\|=\|\bm x(0)\|/\|\bm x(T)\|.
\]

Another interesting scenario is when the solution goes through transient growth. 
If $\|\bm x(t)\|$ is \emph{not} monotone and exhibits transient amplification before decaying,
then \eqref{eq:per-seg-oaa-factor} shows that OAA is only costly on segments where the norm
\emph{decreases}. One may therefore omit amplification on growth segments (where $p_k$ is larger),
and apply OAA only after the trajectory passes a local maximum. For example, when $\norm{\bm x(t)}$ experiences a transient growth followed by a stabilized path with a ``hump'' profile,  the total
amplification overhead is controlled by the peak-to-final ratio,
\begin{equation}\label{eq:transient-overhead}
\prod_{k}\max\Big\{1,\frac{\|\bm x(t_k)\|}{\|\bm x(t_{k+1})\|}\Big\}
\;=\;
\frac{\sup_{t\in[0,T]}\|\bm x(t)\|}{\|\bm x(T)\|}.
\end{equation}

Define the cumulative
amplification factor
\begin{equation}\label{eq:Gamma-def}
\Gamma
:=\prod_{k=0}^{N-1}\max\Big\{1,\frac{\|\bm x(t_k)\|}{\|\bm x(t_{k+1})\|}\Big\}.
\end{equation}

\begin{theorem}[Query complexity via segmentation + OAA]\label{thm:seg-complexity}
Assume block encodings of $H(t)$ and $K(t)$, choose $\delta\ge\delta_0>0$ so that
$\|F_h\|=\Theta(1)$.  Let $\tau$ satisfy \eqref{seg-tau}, and let $N=\lceil T/\tau\rceil$. Choose $(M,j_\ast)$ so that $P_{\rm win}(j_\ast)\ge\Delta=\Omega(1)$ and the light-cone leakage per segment and Hamiltonian simulation error is $\epsilon_{\rm seg}\le\Theta(\epsilon/N)$.

\smallskip
(i) \emph{Non-autonomous dynamics.} There is a segmented algorithm based on Dyson-series
time-dependent simulation of $\mathcal U_k$ and OAA restoration such that the total number
of block-encoding queries is
\begin{equation}\label{eq:total-td-refined}
Q^{\rm (td)}
=
\mathcal O\!\left(
\big(H_{\max}T + K_{\max}T\big)\;
\frac{\log\!\big(\Lambda_\tau K_{\max} T /\epsilon\big)}
{\log\log\!\big(\Lambda_\tau K_{\max} T /\epsilon\big)} \Gamma
\right),
\qquad
\Lambda_\tau := H_{\max}\tau + O(1).
\end{equation}

where $\Gamma$ is defined in \eqref{eq:Gamma-def}. In particular, if $K(t)\le 0$ for all $t$
then $\Gamma=\|\bm x(0)\|/\|\bm x(T)\|$.

\smallskip
(ii) \emph{Autonomous dynamics.} If $H(t)\equiv H$ and $K(t)\equiv K$, then using qubitization/QSVT
per segment yields
\begin{equation}\label{eq:total-ti-refined}
Q^{\rm (ti)}
=
\mathcal O\!\left( \Big ( 
\big( \norm{H}+ \norm{K}\big)T
\;+\;
 \norm{K} T\log(\norm{K}T/\epsilon) \Big) \Gamma 
\right). 
\end{equation}

\end{theorem}

\begin{proof}[Proof sketch]
Fix $\epsilon_{\rm seg}=\Theta(\epsilon/N) =\Theta(\epsilon/(K_{\max} T)) $ and apply Hamiltonian simulation algorithms to each segment.
\end{proof}

\paragraph*{Complexity perspective.}
From a resource-scaling viewpoint, the segmented tight-binding dilation inherits the near-optimal precision dependence of interaction-picture simulation: the overhead in $\epsilon$ enters only through the standard
$\log(\cdot)/\log\log(\cdot)$ factor from truncated Dyson-series techniques. In contrast, a direct LCHS-based LCU embeddings express the non-unitary propagator as a spectral integral over Hamiltonian evolutions, and after smoothing and quadrature, this typically introduces an \emph{additional} polylogarithmic overhead in the target accuracy \cite{ACL23}. 
In the autonomous setting, qubitization/QSVT applied segment-wise yields an \emph{additive} logarithmic dependence on $1/\epsilon$ governed by the dissipative scale (the $\log(1/\epsilon)$ term is weighted by $\|K\|T$, not by $\|H\|T$). Therefore, for weakly dissipative and autonomous problems, our complexity is slightly better. 
Finally, in the unitary limit $K\to 0$ the dilation becomes unnecessary and we recover the optimal Hamiltonian-simulation scaling for $e^{-iHT}$ (up to the usual additive $\log(1/\epsilon)$ term).

\subsection{Numerical Illustration: 2-D Maxwell Viscoelasticity model}
\label{sec:maxwell-test}

Here we present a numerical test to demonstrate the performance of the dilation scheme in the previous section. 
We consider wave propagation in a 2D  solid with viscoelastic behavior. The model involves a stress-strain pair $(\sigma, \epsilon)$, augmented by an internal variable $\gamma$. The variables obey Maxwell type  constitutive relations:
$$
\sigma = K_1(\epsilon - \gamma),
\qquad
\eta \dot{\gamma} = \sigma - K_2 \gamma .
$$Here, $K_1$ is the series elastic modulus, $K_2$ is the parallel elastic modulus, and $\eta$ is the viscosity \cite{lakes2009}.  To express the PDEs in the form of \eqref{eq:phys-ode} with the decomposition \eqref{A2HK},
we define a new state vector ${\bm u} = (u_1, {\bm u}_2, u_3)^T$ where:$$
u_1 = \sqrt{K_1}(\epsilon - \gamma), \quad {\bm u}_2 = \sqrt{\rho^{-1}} {\bm p}, \quad u_3 = \sqrt{K_2} \gamma,
$$
and  we have denoted the momentum as $\mathbf p=(p_x,p_y)^{T}$.

The coupled PDEs for the transformed state vector ${\bm u}$ can be expressed in the first-order form 
\begin{equation}
    \frac{\partial}{\partial t}
\begin{pmatrix}
u_1 \\ {\bm u}_2 \\ u_3
\end{pmatrix}
=
\left(
\underbrace{
\begin{pmatrix}
0 & \frac{\sqrt{K_1}}{\sqrt{\rho}}\nabla\cdot & 0 \\
\frac{\sqrt{K_1}}{\sqrt{\rho}}\nabla & {0} & 0 \\
0 & {0} & 0
\end{pmatrix}
}_{-iH }
+
\underbrace{
\begin{pmatrix}
-K_1/\eta & 0 & \sqrt{K_1 K_2}/\eta \\
{0} & {0} & {0} \\
\sqrt{K_1 K_2}/\eta & {0} & -K_2/\eta
\end{pmatrix}
}_{K}
\right)
\begin{pmatrix}
u_1 \\ {\bm u}_2 \\ u_3
\end{pmatrix}
\end{equation}
To  maintain the structure of these operators, we apply a direct finite-difference scheme to approximate the gradient and divergence on a 2D grid in the domain $[0,2]\times [0,2]$ with $64 $ grid points in each direction. The total number of unknowns is thus $N=16384.$

\begin{figure}[thp]
  \centering
  \includegraphics[width=3.37in]{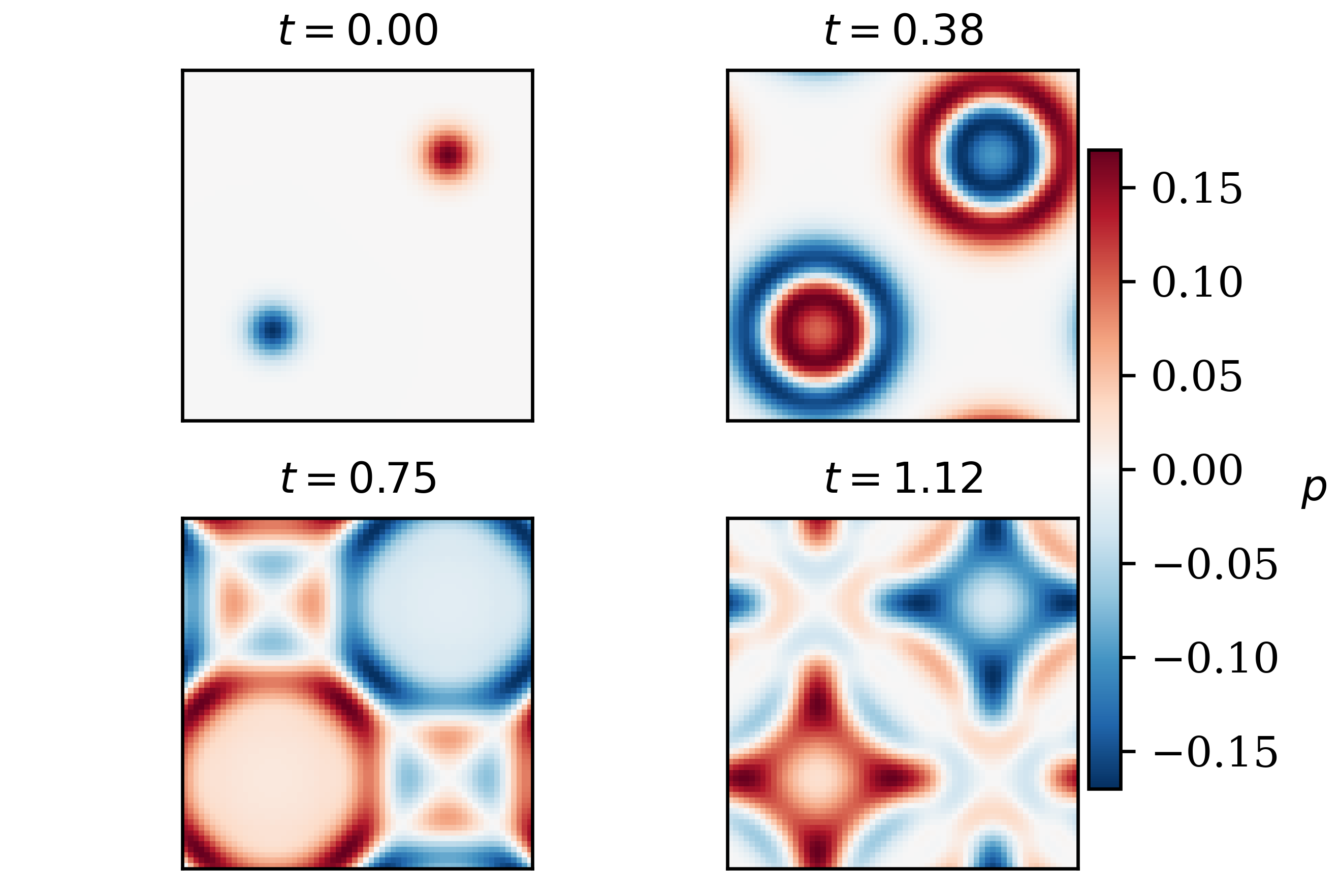}
  \caption{Snapshot of the pressure field \(p=-K(\epsilon-\gamma)\) in a \(2\times2\) periodic domain.}
  \label{fig:maxwell-snaps}
\end{figure}

In the numerical experiments, the following parameters are chosen
\[
  K_1=K_2=1,\quad
  \rho=1,\quad
  \eta=3.4.
\]
For the initial configuration, we prepare 
the volumetric strain as  the {difference} of two Gaussian bumps
\[
  \epsilon=(x,y,0)
    =e^{-[(x-0.5)^2+(y-0.5)^2]/(2\sigma_0^2)}
    -e^{-[(x-1.5)^2+(y-1.5)^2]/(2\sigma_0^2)},
  \qquad \sigma_0=0.05 ,
\]
while \(\bm p=\gamma=0\).   Each hump emits a
cylindrical acoustic wave, and because of the periodic boundary condition,  wraps around the domain.
\cref{fig:maxwell-snaps} shows four snapshots
at various time steps, illustrating outward propagation, wrap-around
interference, and attenuation.

To benchmark the tight-binding dilation, we couple this system to an
ancilla using $F_h$
on a geometric grid~\eqref{p-geom}. For fixed $M$ and
$\delta$, we vary the post-selection point $p_\ast$ in the ancilla.
Figure~\ref{fig:lift-error} compares the pressure signal $p(t)$ at
$(x,y)=(1/4,1/4)$  for
several choices of $p_\ast$. For small $p_\ast$ the dilated trajectory
is visually indistinguishable from the reference solution over the
entire time interval, while for larger $p_\ast$ deviations appear
earlier. In the appendix we show that this behavior confirms the causal structure of the lattice embedding: where the
boundary-induced error is exponentially suppressed in
$\frac{2e K_{\max } T}{(M-j_\ast)\sinh(\delta/2)}$ with $j_\ast$ being the readout site.
The simulations are performed classically with a Runge-Kutta time integrator and a small time step so that the time
discretization error is negligible. Because the dilated Hamiltonian requires only nearest-neighbor interactions in the ancilla, we expect these dynamics to be implementable on near-term
digital quantum devices without complex connectivity overhead. The code used to generate these results is
available at Ref.~\cite{Li-qda2025}.

\begin{figure}[t]
  \centering
  \includegraphics[width=4.37in]{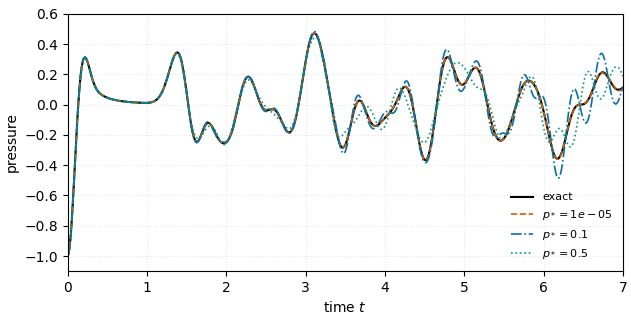}
  \caption{Pressure time series at $(x,y)=(1/4,1/4)$ for the Maxwell
viscoelastic wave. Colored curves show the dilated dynamics for varying readout points $p_\ast$. 
Smaller $p_\ast$ tracks the reference for longer times, while  larger $p_\ast$ leads to  earlier deviations, illustrating the boundary light-cone effect.
}
  \label{fig:lift-error}
\end{figure}

All these numerical simulations are performed on classical computers using fourth-order Runge-Kutta methods with a small step size to remove the time integrator error from the dilation error. Due to the sparse structure of the dilate Hamiltonian, we anticipate that these simulations can also be performed on near-term quantum devices. The code is available at \cite{Li-qda2025}.

\section{Dilation with a Bargmann--Fock Ancilla}
\label{sec:bargmann}
Jin and Liu \cite{jin2024analog} proposed an operator $F_\theta$ that uses the momentum quadrature $\hat p$ to construct the dilated Hamiltonian for analog quantum platforms.   
 Here we present a dilation scheme, where the operator $F$ and state $\kee{r}$ are naturally suited for realization on photonic quantum computing platforms.

Let $\mathcal H_{   A}=\mathscr B$ be the Bargmann–Fock space of entire
functions \cite{hall2013quantum}, with inner product
$\langle f,g\rangle=   \int_{\mathbb C}f(z)^* g(z) e^{-|z|^{2}}  \mathrm d^{2}   z$.
Here $\hat a=\partial_z$ and $\hat a^{\dagger}=z$.
For any $\theta >0 $ define
\begin{equation}
  F_\theta=\theta\bigl(z-\partial_z\bigr),\quad
  \kee{r_\theta}=\exp   \Bigl(\tfrac12 z^{2}-\tfrac z\theta\Bigr),\quad
  \brr{l}f=f(0).
\end{equation}
Because $(z-\partial_z)e^{\frac12 z^{2}-\mu z}=\mu e^{\frac12 z^{2}-\mu z}$,
choosing $\mu=1/\theta$ yields
$F_\theta \kee{r_\theta}=\kee{r_\theta}$,
so the moment identities~\eqref{moms} hold exactly and the dilation
is exact for every~$\theta$. 
As a test case we take the four-site $\mathcal PT$-SSH dimer, similar to \cite{yao2018edge},
\begin{equation}
  H=   \begin{pmatrix}
     \delta & J_{1} & 0 & 0\\
     J_{1} & 0 & J_{2} & 0\\
     0 & J_{2} & \delta & J_{1}\\
     0 & 0 & J_{1} & 0
  \end{pmatrix},  \quad 
  K=\gamma  \mathrm{diag}(1,0,1,0),
\end{equation}
with $(J_1,J_2,\delta,\gamma)=(1.0,0.6,0.3,-1/16)$ and $T=0.5$.
As a result, the dilated Hamiltonian $\widetilde H$ acts on five qumodes (one ancilla).

Toward a continuous-variable (CV) realization,
we simulate \(\widetilde H\) on a single CV mode (the ancilla) plus
four qumodes for the SSH chain using the \textsc{Strawberry Fields} package \cite{killoran2019strawberry} for photonic quantum computing platforms.   
We initialize the solution state in the vacuum, $\mathbf{x}_0=\ket{0}$, then prepare $\ket{e_r}$ with a squeezing gate, $\texttt{Sgate}!\bigl(r=\tfrac12\ln 2\bigr)$, followed by a displacement gate, $\texttt{Dgate}!\bigl(\alpha=-1/(\sqrt{2},\theta)\bigr)$. Throughout, we set $\theta = 0.5$ and truncate the Fock basis at five photons. 
We then evolve the dilated Hamiltonian with a symmetric second-order Trotter method with step size  $\Delta t=0.025$.  In each time step, we  apply an  
      on-site rotations \(R(\pm\delta    \Delta t)\), 
      beam-splitters for the off diagonals $J_1$ and $J_2$
      and the loss coupling,  implemented by
       sandwiching
      \(V(\pi/2)  {\rm CX}(-\gamma\Delta t)  V(-\pi/2)\),
      where \(V(\phi)=e^{i\phi x^{3}}\).
As an observable, we measure the ancilla with \texttt{MeasureFock} and compute
   $\rho_{\text{edge}}(t)=|x_{1}(t)|^{2}$.

As shown in Fig.~\ref{fig:cv-benchmark}, the CV data (markers) and exact solution
(solid line) shows reasonable agreement, confirming the feasibility of the
Bargmann–Fock dilation method.

\begin{figure}[hpt]
  \centering
  \includegraphics[width=0.49\linewidth]{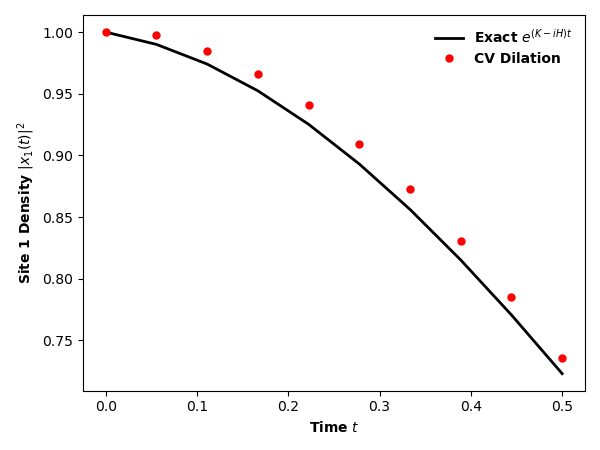}
  \caption{Edge-density dynamics for the $\mathcal PT$-SSH dimer:
           simulation using CV dilation (dots) vs.\ exact results using matrix exponential (solid).}
  \label{fig:cv-benchmark}
\end{figure}

\section{Further Discussions}

\subsection{Implementation by LCU}

Our dilation admits a linear–combination-of-unitaries (LCU) form that is
closely related to the linear combination of Hamiltonian simulation
(LCHS) technique~\cite{ACL23,ALL23}, but arises from a {different}
spectral transform (Mellin rather than Fourier).
Assume the interior generator \(F\) admits the spectral resolution
\begin{equation}\label{eq:mellin-res}
   F \;=\; - i  \int_{\mathbb R} \omega\;
        \ket{\psi_\omega} \bra{\psi_\omega}\; d\mu(\omega),
\end{equation}
so that \(iF = \int \omega  \ket{\psi_\omega} \bra{\psi_\omega}  d\mu\).
Then the dilated Hamiltonian factorises as
\[
I_A \otimes H(t) \;+\; i F \otimes K(t)
    \;=\;
   \int_{\mathbb R} \ket{\psi_\omega} \bra{\psi_\omega}
     \;\otimes\; \bigl(H(t)+\omega K(t)\bigr)\; d\mu(\omega).
\]
Consequently, for any left–right test vectors \(\brr{l}\), \(\kee{r}\) and
initial state \(\ket{\bm x_0}\),
\begin{equation}\label{eq:lchs-mellin}
\begin{aligned}
&       (\brr{l} \otimes I)\;
   \mathcal T \exp \Bigl(
     -i  \int_0^T \bigl[I_A \otimes H(s) + i F \otimes K(s)\bigr]  ds
   \Bigr)\;
   (\kee{r} \otimes\ket{\bm x_0}) \\ 
 &  \;=\;
   \int_{\mathbb R}
      \bigl[\brr{l}\psi_\omega\bigr] 
     \bigl[\bra{\psi_\omega} \kee{r}\bigr] 
     \mathcal T e^{-i \int_0^T (H(s)+\omega K(s)) ds}
     \ket{\bm x_0}  d\mu(\omega).
\end{aligned}
\end{equation}
Equation~\eqref{eq:lchs-mellin} is an LCHS–type integral representation:
after quadrature in \(\omega\), one obtains a concrete LCU over propagators
of the form \(\mathcal T e^{-i \int (H+\omega K)}\).

\medskip
For the differential operator $F$, the Mellin–wave basis
\cite{DebnathBhatta2015}
\[
   \psi_\omega(p)=p^{-1/2-i\omega},\qquad \omega\in\mathbb R,
\]
satisfies \(F \psi_\omega = -i\omega \psi_\omega\), and \(d\mu(\omega)=d\omega/(2\pi)\),
so \eqref{eq:mellin-res} is precisely the Mellin transform of \(F\).

\medskip

A fully discrete LCU form also appears after the discretization on the SBP stencil. Since the grid length
\(M=\Omega(\log(1/\epsilon))\) is modest, we can \emph{classically}
diagonalise the interior matrix \(F_h\in\mathbb C^{(M+1)\times(M+1)}\):
\begin{equation}
    F_h \;=\; - i \sum_{j=0}^{M}\omega_j\;\ket{\phi_j} \bra{\phi_j},
    \qquad
    \omega_j\in\mathbb R,\quad \{\ket{\phi_j}\}_{j=0}^{M} \text{ orthonormal}.
\end{equation}
Then
\[
I_A \otimes H(t) \;+\; i F_h \otimes K(t)
   \;=\;\sum_{j=0}^{M}
      \ket{\phi_j} \bra{\phi_j}\;\otimes\;\bigl(H(t)+\omega_j K(t)\bigr),
\]
and the time–ordered propagator decomposes as
\[
U(t)
  \;=\;\sum_{j=0}^{M}
      \ket{\phi_j} \bra{\phi_j}\;\otimes\;
      \mathcal T e^{-i\int_0^t \bigl(H(s)+\omega_j K(s)\bigr) ds}.
\]
Hence, for the dilation–consistent pair
\begin{equation}\label{eq:lhrh}
\kee{r_h}= \ket{ r_h}= \frac{1}{\sqrt{S}}\sum_{j=0}^{M}\sqrt{w_j} |j\rangle,\qquad
\brr{l_h } = \sqrt{\frac{S}{S_W}} \bra{l_h}, \;  \bra{l_h}= \frac{1}{\sqrt{S_W}}\sum_{j=0}^{j_\ast}\sqrt{w_j} \langle j|,
\end{equation}
with normalizing constants $S_W=\sum_{j=0}^{j_\ast}w_j,\qquad S=\sum_{j=0}^{M}w_j$
we obtain the explicit LCU expansion
\begin{equation}
\ket{\bm x(t)}
\;=\;
\sum_{j=0}^{M}
      \underbrace{\langle l_h|\phi_j\rangle \langle \phi_j|r_h\rangle}_{=:c_j}\;
      \mathcal T e^{-i\int_0^t \bigl(H(s)+\omega_j K(s)\bigr)ds}\;
      \ket{\bm x_0}
\;=\;\sum_{j} c_j  U_j(t)\ket{\bm x_0}.
\end{equation}
By Cauchy–Schwarz with the orthonormal basis \(\{\ket{\phi_j}\}\),
\[
\| \bm c \|_1
\;\le\; \|\ket{l_h}\| \|\ket{r_h}\|
\;=\; \frac{\sqrt{S}}{\sqrt{S_W}}
\;=\; \frac{1}{\sqrt{P_{\rm win}(j_\ast)}},
\qquad
P_{\rm win}(j_\ast):=\frac{S_W}{S}.
\]
Thus a direct (single–shot) LCU implementation has success amplitude
scaling like \(\sqrt{P_{\rm win}}\); if the window weight \(P_{\rm win}\) is tiny,
the LCU \(\ell_1\)–weight \(\|\bm c\|_1\) is large.
A standard remedy is time segmentation: split \([0,T]\) into
\(L\) segments of length \(\tau=\Theta(1/K_{\max})\), apply LCU on each
segment (so that \(\ell_1\)–weights per segment are \(\co(1)\)),
use robust oblivious amplitude amplification (OAA) per segment, and
coherently \emph{restore} the ancilla back to \(\ket{r_h}\) at the end of each
segment. In this way the per–segment LCU \(\ell_1\)–weight
\(\mathsf{L}_{\rm seg}\) can be kept \(\le 2\) (cf.\ \cite[Sec.~3]{low2018hamiltonian}),
yielding an overall near–optimal scaling when combined with the
Dyson–series interaction–picture framework in the previous subsection.

\subsection{ODEs with transient or persistent growth}
Most earlier quantum algorithms target \emph{dissipative} ODEs where
$K\prec 0$, so solutions are contractive (e.g., \cite{jin2023quantum,ALL23}).
Even for an autonomous \emph{stable} system
$\bm x'(t)=A\bm x(t)$ whose spectrum lies in the closed left half–plane
(and purely imaginary eigenvalues are simple), non-normality can cause
substantial \emph{transient} growth when $K$ is not negative definite.

\begin{figure}[htp]
  \centering
  \includegraphics[width=0.75\linewidth]{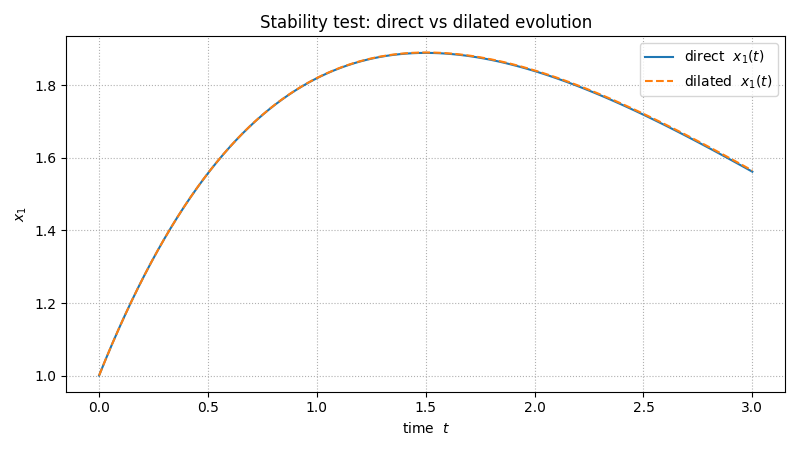}
  \caption{Exact and dilated solutions of \eqref{eq:stable-dilation}.}
  \label{fig:stable-ODE}
\end{figure}

As a representative example, consider
\begin{equation}\label{eq:stable-dilation}
  A \;=\;
  \begin{bmatrix}
    -\tfrac12 & 1 \\[2pt]
    0 & -\tfrac12
  \end{bmatrix},
  \qquad
  \bm x_0=\begin{bmatrix}1\\[2pt]1\end{bmatrix}.
\end{equation}
The exact solution exhibits transient growth with
$\|\bm x(t)\|\sim t e^{-t/2}$ before decaying.
We discretize the dilation using the second–order diagonal–norm SBP split form
on a \emph{geometric mesh} \(p_j=\exp[-(1-j/M)]\) with \(M=10\) points
(\(\delta=1\), i.e.\ \(\Lambda=M\)).
We set \(\theta=2\) (\(\beta=0\)), prepare the ancilla in
\(\ket{r_h}\propto W^{1/2}\mathbf 1\), and post–select at
\(j_\ast=8\) so that \(\langle l_h|r_h\rangle=1\).
Both the physical ODE and the dilated system are integrated in time by a
fourth–order Runge–Kutta scheme up to \(t_{\mathrm{final}}=3\) using
\(n_{\mathrm{steps}}=400\) (step size \(7.5\times 10^{-3}\)),
making the time–integration error negligible compared to the
spatial/discretization effects.
The ancilla–projected trajectory \(x_1(t)\) from the dilated evolution matches the exact solution. This provides direct numerical evidence (complementing our main theorems) that the dilation framework also applies to ODEs with transient instability.

\medskip

The dilation also applies to \emph{unstable} ODEs with persistent growth. As a simple test, take
\begin{equation}\label{eq:unstable-ode}
  A=\begin{bmatrix}\tfrac12 & 1 \\[2pt] 0 & \tfrac12\end{bmatrix},
  \qquad
  \bm x_0=\begin{bmatrix}1\\[2pt]1\end{bmatrix}.
\end{equation}
Figure~\ref{fig:unstable-ODE} shows excellent agreement between the exact and
dilated solutions, with the latter obtained with a similar dilation scheme as above. 

\begin{figure}[htp]
  \centering
  \includegraphics[width=0.75\linewidth]{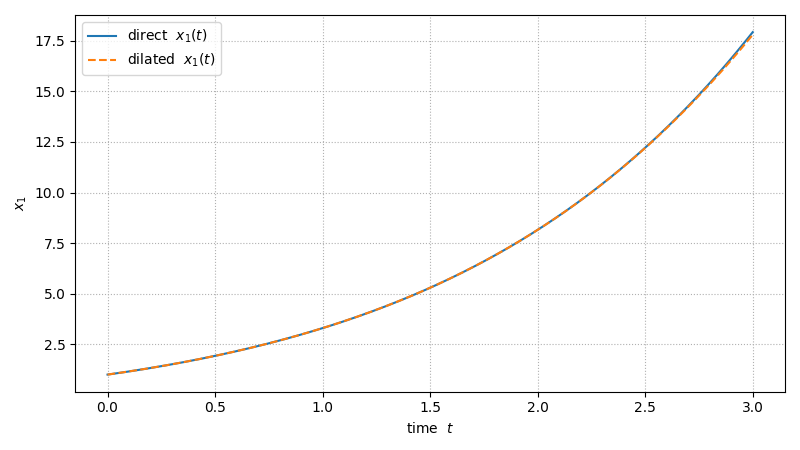}
  \caption{Exact and dilated solutions of \eqref{eq:unstable-ode}.}
  \label{fig:unstable-ODE}
\end{figure}

\section{Summary and outlook}
This paper presented a general dilation scheme that encodes the dynamics associated with an ODE or PDE into a unitary dynamics. We propose to wrap such dilation schemes through three elements, a skew Hermitian operator $F$, an encoding element $\kee{r}$, and an evaluation functional $\brr{l}$.  The seven concrete families detailed in the main text, together with flexible parameter $\theta$ and unitary transformation among the triple, illustrate the breadth of this new framework, which allows the dilation to be {co-designed} with the target hardware (gate-based, continuous-variable, analogue) and with the structure of a given application.

Nonlinear problems are generally difficult for quantum computing algorithms. One approach that has been extensively studied is  the Carleman embedding method  \cite{liu2021efficient,liu2023efficient,krovi2023improved,wu2025quantum,brustle2024quantum}, which can be viewed as a dilation as well, where $\kee{r}$ corresponds to monomials, forming a nonlinear transformation, and $\brr{l}$ can be interpreted as the evaluation of the first monomial. It is possible that the current framework can produce other dilation schemes for nonlinear problems, and it is an interesting future direction.  

\bigskip 
\paragraph*{Acknowledgement.}   This research is supported by the NSF Grant DMS-2411120.

\bibliographystyle{abbrv}

\appendix

\section{Lightcone property of the finite difference operator}
\label{proof-thm1}

This section is dedicated to the error bound \cref{thm:second-order-error}, which we will prove by separating the approximation error of the differential operator and the boundary error.

The dilation method involves the following partial differential equation,
\begin{equation}\label{eq: pde}
     u_t =- \theta \kappa(t) (p u_p + \frac12 u), \quad u(t,1)= u_R(t),  
\end{equation}
with solution $u(t,p): [0,+\infty) \otimes [0,1] \mapsto \mathbb{R} $. $\theta >0$ is a numerical parameter that we will choose to optimize our quantum algorithms later. 
This is a first-order PDE,  and the solution can be constructed by the method of characteristics. When $\kappa(t)$ is positive,  the solution propagates to the left and $u_R(t)$ acts as the boundary condition.  For convenience, we write 
\begin{equation}\label{eq: pde'}
  u_t =-  \kappa(t) F_\theta u, \quad F_\theta = \theta F, 
\end{equation}
with the partial differential operator given by, for any differentiable function $v(p)$,
\begin{equation}\label{eq: Fvp}
     Fv(p):= \left( p  \partial_p v(p)+\tfrac12 v(p)\right). 
\end{equation}

The time-dependent coefficients $-\kappa(t) $ will be generalized to a Hermitian negative definite matrix $K(t)$ later in this section. 

\medskip 

To obtain a direct numerical method, we partition the unit interval into $M$ subintervals, and find approximation solutions on a uniform grid with grid size denoted by $h$,
\begin{equation}\label{eq: grid}
    p_i=i  h, h=\dfrac1M,   \quad 0\leq i \leq M.
\end{equation}

To construct an approximation to the solution of \cref{eq: pde'}, we construct a difference operator on the grid, as follows, denoting $v_i:=v(p_i)$,
\begin{equation}\label{eq: Fh2}
\begin{aligned}
(F_h\bm v)_0 &=  \frac{1}{4\sqrt2} v_{1},\\[4pt]
(F_h\bm v)_1 &=  \frac34 v_{2}-\frac{1}{4\sqrt2} v_{0},\\[4pt]
(F_h\bm v)_i &=  \frac{p_{i+1}+p_i}{4h} v_{i+1}
            -\frac{p_i+p_{i-1}}{4h} v_{i-1},
            \qquad 2\le i\le M-1,\\[4pt]
(F_h\bm v)_M &= -\frac{p_M+p_{M-1}}{4h} v_{M-1} .
\end{aligned}
\end{equation}
The coefficients $\frac{1}{4\sqrt{2}}$ come from the quadrature approximation in the summation by parts methods \cite{Strand1994,MattssonNordstrom2004}.

The following property can be easily verified by using integration/summation by parts. 
\begin{lemma}\label{F-herm}
    With zero boundary condition, $v(1)=0$, both operators $F$ in \cref{eq: Fvp} and \cref{eq: Fh2} are skew Hermitian. 
\end{lemma}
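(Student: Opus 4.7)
The plan is to handle the continuous operator and the discrete operator separately. For both, the computation reduces to an integration-by-parts (respectively, summation-by-parts) identity that exploits the boundary condition $v(1)=0$ together with the natural vanishing at $p=0$ induced by the factor of $p$ in $F$.

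For the continuous case, I would compute
\[
\langle F v, w\rangle = \int_0^1 \bigl(p\, v'(p)^* + \tfrac12 v(p)^*\bigr)\,w(p)\,dp
\]
and integrate the first term by parts. The boundary term $[p\,v^* w]_0^1$ vanishes automatically: at $p=0$ because of the explicit factor $p$, and at $p=1$ because both $v,w\in\mathcal H_A$ vanish there. Combining the resulting $-\int_0^1 v^* w\,dp$ with the original $+\tfrac12\int_0^1 v^* w\,dp$ and the surviving $-\int_0^1 p\, v^* w'\,dp$ reassembles precisely $-\int_0^1 v^*\bigl(p w' + \tfrac12 w\bigr)dp = -\langle v, F w\rangle$, so $F^\dagger=-F$ on $\mathcal H_A$.

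For the discrete operator, the key observation is that the unusual factors $1/(2\sqrt2)$ and $1/(4\sqrt2)$ in the boundary rows are the signature of a summation-by-parts discretization in a trapezoidal-weighted inner product. I would therefore work with the weighted pairing
\[
\langle u,v\rangle_H = h\Bigl(\tfrac12 u_0^* v_0 + \sum_{i=1}^{M-1} u_i^* v_i + \tfrac12 u_M^* v_M\Bigr),
\]
and verify that $HF_h$ is antisymmetric on the admissible subspace $\{v:v_M=0\}$. Pairing the off-diagonal entries of \cref{eq: Fh2} against the weights produces three checks: (i) each interior pair $(i,i+1)$ with $2\le i\le M-2$ carries weights $(h,h)$ and manifestly opposite coefficients $\pm(p_{i+1}+p_i)/(4h)$; (ii) the pair $(1,2)$ carries weights $(h,h)$ and coefficients $\pm 3/4$; (iii) the critical boundary pair $(0,1)$ uses the reduced weight $\omega_0=h/2$ to cancel the factor-of-two asymmetry between $1/(2\sqrt2)$ and $-1/(4\sqrt2)$, since $\tfrac{h}{2}\cdot\tfrac{1}{2\sqrt2}+h\cdot\bigl(-\tfrac{1}{4\sqrt2}\bigr)=0$. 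The imposed constraint $v_M=0$ removes any ambiguity at the right endpoint: the $M$-th row of $F_h$ is already zero, so the unmatched entry $(F_h)_{M-1,M}$ never acts on the admissible subspace.

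The main obstacle is not conceptual but bookkeeping, and specifically the correct identification of the inner product. The temptation to prove antisymmetry of $F_h$ in the unweighted $\ell^2$ inner product must be resisted, since $1/(2\sqrt2)\neq 1/(4\sqrt2)$; matching the discrete calculation to the weighted SBP norm associated with the trapezoidal quadrature is essential. Once the correct pairing is in place the verification is a short algebraic check that closely mirrors the continuous integration-by-parts argument, with the boundary weight $\omega_0=h/2$ playing the role of the vanishing boundary term at $p=0$.
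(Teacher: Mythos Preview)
Your proof is correct and follows exactly the route the paper indicates (``integration/summation by parts''), which is all the paper offers by way of argument. Your explicit identification of the trapezoidal-weighted discrete inner product is the right resolution of an ambiguity the paper leaves implicit here; the same quadrature weights reappear in the paper's later SBP discussion (\cref{sec:SBP-high-order}), confirming your reading.
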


In the next few sections, we explain several approximation properties of $F_h $ in approximation the solution of the PDE \eqref{eq: pde'}. 

\subsection{Finite–Propagation Property of the Semi-Discrete Scheme}
\label{sec:finite-propagation}

One important property of the PDE \cref{eq: pde} is the finite propagation speed and finite domain of influence. According to the method of characteristics, $-\theta \kappa(t)$ is regarded as the propagation speed of the solution moving to the left. For example, if $\kappa(t)$ is always bounded by $\tfrac{1}{4T}$, then the part of the initial condition $u(0,p)$ that lies in $p\in (\tfrac34,1)$ would have no impact on the solution $u(T,p)$ for $p \in (0, \tfrac12).$ 

As a simple demonstration, consider the first–order linear PDE
\[
u_t  = - p  u_p - \tfrac12  u,
\qquad 0<p<1,\quad t\ge0. 
\]
Its behaviour is completely transparent through the \emph{method of
characteristics}.  One can find the solution by following a family of trajectories \((p(t),t)\), satisfying $\frac{dp}{dt}=-  p,$ and along these trajectories, the solution of the above PDE is reduced to the solution of the ODEs (by chain rule)
\[
\frac{du}{dt}=-  \tfrac12  u. 
\]
The first equation integrates to an exponential
\[
p(t)=p_0  e^{-(t-t_0)}  \Longrightarrow  
t-t_0=\ln\frac{p_0}{p(t)} . 
\]
Thus every “signal’’ injected at some point \(p_0\) travels
\emph{inward} with speed proportional to the current \(p\); it needs the
finite time
\[
T(p_0   \to p_\ast)=\ln   \frac{p_0}{p_\ast}
\]
to reach a location \(p_\ast<p_0\).
Meanwhile, the second ODE for $u$ gives exponential damping along the same curve,
\[
u(p(t),t)=u(p_0,t_0)  
          \exp   \Bigl[-\tfrac12\bigl(t-t_0\bigr)\Bigr]. 
\]

Take \(p_0=1\) (the boundary) and impose \(u(1,t)=1\).
These equations describe the disturbance at the boundary that first hits the interior point
\(p_\ast\) exactly at time 
\begin{equation}\label{arriveT}
    T_\text{arrive}=\ln\frac1{p_\ast},
\end{equation}
no sooner: One remarkable consequence is that such disturbance takes infinite time to arrive at $p=0.$

We now establish a finite-propagation-speed estimate for the discrete scheme. The bound shows that the influence of the right-hand boundary condition at $
p=1$  decays exponentially as one moves into the interior of the grid, playing the same role for our numerical solution that the Lieb–Robinson bound \cite{Lieb1972} plays in quantum lattice systems.

\begin{lemma}[Finite propagation speed bound]
\label{lem:fps-sharp}
Fix $\theta>0$ and assume the modulation
$0\le\kappa(t)\le K_{\max}$ on the whole interval $[0,T]$.
Let $p_\ast\in(0,1)$ be an interior abscissa and suppose the
travel-time condition
\begin{equation}\label{cfl-sharp}
   e  \theta  K_{\max}  T
    < 
   1- {p_\ast}
\end{equation}
holds.
Start from the canonical right-boundary vector
\(
\bm u^{R}=(0,\dots,0,1)^{T} = \ket{M}.
\)
Then for every $t\in[0,T]$ and every grid index $i$ with $p_i\le p_\ast$
the propagated amplitude satisfies
\[
\bigl|  \bra{i}  e^{-   \int_{0}^{t}\theta\kappa(s)  ds  F_h}  \ket{\bm u^{R}}\bigr|
    \le 
         \frac{ \left( \frac{e  \theta  K_{\max}  T}{ 1- {p_\ast}} \right)^m }{1-\frac{e  \theta  K_{\max}  T}{ 1- {p_\ast}}},
   \qquad m=M-i .
\]
\end{lemma}

\begin{proof}
Write $F_h=B_{+}-B_{-}$ where
\((B_{+})_{k,k+1}=(p_{k+1}+p_k)/(4h)\) and
\((B_{-})_{k,k-1}=(p_{k}+p_{k-1})/(4h)\).
Each product of $k$ factors moves information at most $k$ nodes, hence
$\langle i|F_h^{k}|M\rangle=0$ whenever $k<m$.

Because $(p_r+p_s)\le2$ and every row has two non-zero entries,
\[
\|F_h\|_{\infty}\le\frac{1}{h},
\quad\Longrightarrow\quad
\|F_h^{k}\|_{\infty}\le h^{-k}.                               
\]
Combining these observations, we have 
\[
|\langle i|e^{-   \int_0^t\theta\kappa(s)ds  F_h}|M\rangle|
  \le 
 \sum_{k=m}^{\infty}\frac{(\theta K_{\max}t/h)^{k}}{k!}.
\]

Using Stirling’s bound \(k!\ge(k/e)^{k}\) we simplify each term to
\(\dfrac{(\theta K_{\max}t/h)^{k}}{k!}
      \le\bigl(\dfrac{e  \theta K_{\max}t}{h  k}\bigr)^{k}\).
For $k=m$ and $h  m\ge 1-  p_\ast$ (because $p_i\le p_\ast$) this ratio is
bounded by
\[
\frac{e  \theta K_{\max}t}{h  m}
  \le 
 \frac{e  \theta K_{\max}t}{1-p_\ast}
 <1,
\]
where the last inequality is the travel-time hypothesis
\eqref{cfl-sharp}.
Summing the geometric tail yields the stated exponential bound.
\end{proof}

\bigskip 

This finite propagation estimate can be directly generalized to bound the error due to the homogeneous boundary condition that helps to ensure the Hermitian property of the dilated Hamiltonian. To the error from the interior, let us first consider the following dynamics: 
\begin{equation}\label{eq: chi-error1}
    \frac{d}{dt}\chi(t)=\bigl(-i I_A \otimes H(t)+\theta F_h \otimes K(t)\bigr)\chi(t)
+ \alpha  \ket{M} \otimes K(t) \bm  x(t),\quad \chi(0)=0.
\end{equation}
The first term on the right hand side in \cref{eq: chi-error1} is the dilated Hamiltonian, while the second term, which only concentrates at the boundary, comes from the effect of dropping the boundary condition with some positive constant $\alpha$.  In particular, it carries the solution of the ODE: $\bm x(t).$ How \cref{eq: chi-error1} emerges from the global error analysis will be elaborated in the next section.

\begin{lemma}[Finite propagation speed bound for the dilated system]
\label{lem:fps-chi}
Fix $\theta>0$ and
$K_{\max}:=\max_{t\in[0,T]}\|K(t)\|$.
Let $p_\ast\in(0,1)$ be an interior abscissa and assume the travel-time condition
\eqref{cfl-sharp} from Lemma~\ref{lem:fps-sharp},
\[
e \theta K_{\max} T  <  1-p_\ast.
\]
Fix an interior evaluation point $p_{ *}=p_i\in(0,1)$ and set $m:=M-i= (1-p_\ast)/h$.
Consider $\chi$ governed by
\begin{equation}\label{eq:chi-error}
    \frac{d}{dt}\chi(t)=\bigl(-i I_A \otimes H(t)+\theta F_h \otimes K(t)\bigr)\chi(t)
     + \alpha \ket{M} \otimes K(t) \bm x(t),\qquad \chi(0)=0.
\end{equation}
Then, for every $T\in[0,\infty)$,
\begin{equation}\label{eq:chi-final-bound}
\bigl\|(\bra{i} \otimes I) \chi(T)\bigr\|
 \le 
\alpha K_{\max} X(T) \|\bm x(0)\| T\cdot
\frac{\bigl(\frac{e \theta K_{\max} T}{ 1-p_\ast }\bigr)^{m}}{1-\frac{e \theta K_{\max} T}{ 1-p_\ast }},
\end{equation}
where $X(T):=\max_{t\in[0,T]}\|\bm x(t)\|/\|\bm x(0)\|$.
Consequently, if $ \varrho:=\frac{e \theta K_{\max} T}{1-p_\ast}<1$, then for any $\epsilon>0$
it suffices to choose
\[
m \ge \left\lceil
\frac{\ln \bigl(\alpha K_{\max} X(T) \|\bm x(0)\| T/\bigl(\epsilon(1-\varrho)\bigr)\bigr)}{\ln(1/\varrho)}
\right\rceil
\]
to guarantee $\bigl\|(\bra{i} \otimes I) \chi(T)\bigr\|\le \epsilon$.
\end{lemma}

\begin{proof}
We first eliminate the $I_A \otimes H$ term in \cref{eq:chi-error} by working in the interaction picture. 
Let $U_H(t)$ solve $U_H'(t)=-i H(t) U_H(t)$ with $U_H(0)=I$,
and define the interaction-picture quantities
\[
K_I(t):=U_H(t)^\dagger K(t)U_H(t),\qquad
\bm x_I(t):=U_H(t)^\dagger \bm x(t),\qquad
\chi_I(t):=(I_A \otimes U_H(t)^\dagger) \chi(t).
\]
Since $U_H(t)$ is unitary, $\|K_I(t)\|=\|K(t)\|$ and $\|\bm x_I(t)\|=\|\bm x(t)\|$.

Then \eqref{eq:chi-error} becomes the inhomogeneous linear ODE
\begin{equation}\label{eq:chiI-ODE}
\frac{d}{dt}\chi_I(t)
=\bigl(\theta F_h \otimes K_I(t)\bigr)\chi_I(t)
 + \alpha \ket{M} \otimes K_I(t) \bm x_I(t),
\qquad \chi_I(0)=0.
\end{equation}

Let $W(T,t)$ denote the fundamental (propagator) solution of the homogeneous part,
\[
\partial_T W(T,t)=\bigl(\theta F_h \otimes K_I(T)\bigr)W(T,t),\qquad W(t,t)=I.
\]
Variation of constants gives
\begin{equation}\label{eq:chiI-variation}
\chi_I(T)
=\alpha\int_0^T W(T,t) \ket{M} \otimes K_I(t) \bm x_I(t) dt.
\end{equation}

\medskip

Because $F_h$ is time-independent, the Dyson (time-ordered) series for $W(T,t)$ reads
\begin{equation}\label{eq:dyson-W}
W(T,t)
=I+\sum_{k=1}^{\infty}\theta^{k} 
\underbrace{\int_{t\le t_1\le\cdots\le t_k\le T}
\bigl(F_h \otimes K_I(t_k)\bigr)\cdots\bigl(F_h \otimes K_I(t_1)\bigr) dt_1\cdots dt_k}_{=:~\mathcal{J}_k(T,t)}.
\end{equation}
Since $F_h$ commutes with itself,
\[
\mathcal{J}_k(T,t)=F_h^{ k} \otimes \Bigl(\int_{t\le t_1\le\cdots\le t_k\le T}
K_I(t_k)\cdots K_I(t_1) dt_1\cdots dt_k\Bigr)
=F_h^{ k} \otimes J_k(T,t),
\]
where, with the convention $J_0(T,t):=I$,
\begin{equation}\label{eq:Jk-def}
J_k(T,t):=\int_{t\le t_1\le\cdots\le t_k\le T}
K_I(t_k)\cdots K_I(t_1) dt_1\cdots dt_k.
\end{equation}
Thus,
\begin{equation}\label{eq:W-factorized}
W(T,t)=\sum_{k=0}^{\infty}\theta^{k} F_h^{ k} \otimes J_k(T,t).
\end{equation}

\medskip

Insert \eqref{eq:W-factorized} into \eqref{eq:chiI-variation}:
\[
\chi_I(T)
=\alpha\sum_{k=0}^{\infty}\theta^{k} F_h^{ k}\ket{M} \otimes 
\Bigl(\int_0^T J_k(T,t) K_I(t) \bm x_I(t) dt\Bigr).
\]
Apply $(\bra{i} \otimes I)$ and use the band structure of $F_h$:
\begin{equation}\label{eq:band-cutoff}
\langle i|F_h^{ k}|M\rangle=0\quad\text{for all }k<m,\qquad m:=M-i,
\end{equation}
so only the terms with $k\ge m$ remain:
\begin{equation}\label{eq:chiI-tail}
(\bra{i} \otimes I) \chi_I(T)
=\alpha\sum_{k=m}^{\infty}\theta^{k} \langle i|F_h^{ k}|M\rangle 
\Bigl(\int_0^T J_k(T,t) K_I(t) \bm x_I(t) dt\Bigr).
\end{equation}

\medskip

We recall the uniform bound from Lemma~\ref{lem:fps-sharp}:
since each row of $F_h$ has at most two nonzero entries of size $\le 1/(2h)$,
\[
\|F_h\|\le \frac{1}{h}\quad\Longrightarrow\quad
\|F_h^{ k}\|\le h^{-k}.
\]
Also, $\|K_I(t)\|\le K_{\max}$ and $\|\bm x_I(t)\|\le X(T)$.
For the time-ordered integrals \eqref{eq:Jk-def},
\[
\|J_k(T,t)\|
 \le \int_{t\le t_1\le\cdots\le t_k\le T}       \|K_I(t_k)\cdots K_I(t_1)\| dt_1\cdots dt_k
 \le \frac{(K_{\max})^{k}}{k!} (T-t)^{k}.
\]
Therefore,
\[
\biggl\|\int_0^T J_k(T,t) K_I(t) \bm x_I(t) dt\biggr\|
\le\int_0^T \frac{(K_{\max})^{k}}{k!}(T-t)^k K_{\max} X(T)  dt
=\frac{(K_{\max}T)^{k+1}}{(k+1)!} K_{\max} X(T).
\]
Combining with $\|\langle i|F_h^{ k}|M\rangle\|\le \|F_h^{ k}\|_{\infty}\le h^{-k}$,
\eqref{eq:chiI-tail} yields
\[
\bigl\|(\bra{i} \otimes I) \chi_I(T)\bigr\|
\le
\alpha K_{\max} X(T) T 
\sum_{k=m}^{\infty}\frac{\bigl(\theta K_{\max} T/h\bigr)^{k}}{(k+1)!}.
\]
Use $(k+1)!\ge k!$ to get the simpler tail
\[
\sum_{k=m}^{\infty}\frac{\bigl(\theta K_{\max} T/h\bigr)^{k}}{(k+1)!}
 \le \sum_{k=m}^{\infty}\frac{\bigl(\theta K_{\max} T/h\bigr)^{k}}{k!}.
\]
Now apply Stirling’s bound $k!\ge (k/e)^k$ as in Lemma~\ref{lem:fps-sharp}:
\[
\frac{\bigl(\theta K_{\max} T/h\bigr)^{k}}{k!}
 \le \Bigl(\frac{e \theta K_{\max} T}{h k}\Bigr)^{k}.
\]
For $k\ge m$ and $hm\ge 1-p_\ast$ (since $p_i\le p_\ast$), we obtain the uniform ratio
\[
\Bigl(\frac{e \theta K_{\max} T}{h k}\Bigr)^{k}
 \le 
\Bigl(\frac{e \theta K_{\max} T}{ 1-p_\ast }\Bigr)^{k}
=: \varrho^{ k},
\qquad \varrho=\frac{e \theta K_{\max} T}{1-p_\ast}<1
\]
by the travel-time hypothesis \eqref{cfl-sharp}. Summing the geometric tail gives
\[
\sum_{k=m}^{\infty}\frac{\bigl(\theta K_{\max} T/h\bigr)^{k}}{k!}
 \le \frac{\varrho^{ m}}{1-\varrho}.
\]
Inserting this into the previous bound yields \eqref{eq:chi-final-bound} for
$\|(\bra{i} \otimes I) \chi_I(T)\|$, and since the physical and interaction pictures differ by a unitary on the second tensor factor, the same bound holds for $\|(\bra{i} \otimes I) \chi(T)\|$.
\end{proof}

\subsection{Lightcone property on the geometric grid}\label{sec: CFL-geom}

Let $p_j=e^{-(M-j)\delta}$ with fixed $\delta>0$, and let $F_h$ be the split--form
Euclidean--skew tridiagonal with off--diagonals
\[
f_j=\frac{1}{4\sinh(\delta/2)}\times
\begin{cases}
\sqrt{1+e^{-\delta}}, & j=0,\\
1, & 1\le j\le M-2,\\
\sqrt{1+e^{\delta}}, & j=M-1.
\end{cases}
\]
We let the uniform interior weight be $f_{\rm int} =  \frac{1}{4\sinh(\delta/2)}.$

\begin{lemma}
\label{lem:geom-CFL-bulk}
Let $F_h$ be the split–form tridiagonal matrix from the SBP discretization on the geometrically scaled grid with $\delta \geq 1$. $\forall j_\ast \in [M] $, let $m=M-j_\ast.$  Then
\begin{equation}\label{eq:geom-CFL-bulk}
\bigl|\langle j_\ast|\chi(T)|M\rangle\bigr|
\;\le\;
\frac{C(\delta) \varrho^{\,m}}{1-\varrho^{\,2}},
\qquad
\varrho_:=\frac{ e \theta K_{\max} T }{4m \sinh \frac{\delta}{2}}
\end{equation}
where $C(\delta)= \max\{f_0, f_{M-1}\}$.  $C(\delta) \leq \sup_{\delta \geq 1} \{ f_{M-1} \} = \frac{\sqrt{1+e}}{4\sinh(1/2)} \approx 0.926$. 
\end{lemma}

\begin{proof}[Proof sketch]

Write the split–form SBP chain as
\[
F_h \;=\; \sum_{j=0}^{M-1} f_j\bigl(\,\ket{j} \bra{j{+}1}\;-\;\ket{j{+}1} \bra{j}\,\bigr)
\;=\;A-A^\dagger,\qquad
A:=\sum_{j=0}^{M-1} f_j\,\ket{j} \bra{j{+}1}.
\]
Thus $A$ lowers the site index by one (a left hop) with weight $f_j$, while $-A^\dagger$
raises the index by one (a right hop) with weight $f_j$ and an extra minus sign.
Fix $j_\ast\le M$ and set $m:=M-j_\ast$.
For every integer $k\ge m$,   $\langle j_\ast|F_h^{\,k}|M\rangle$ constitutes a  sum  over all length-$k$ nearest–neighbour walks on the line
$\{0,1,\dots,M\}$ that start at $M$ and end at $j_\ast$; each appearance of an oriented
edge $e=(\ell\leftrightarrow \ell{+}1)$  contributes a factor $f_j$. 

Expand $\chi(T) $ in Dyson series and consider again $F_h^k$. Only $k\ge m$ with $k\equiv m\ (\mathrm{mod}\ 2)$
contribute to $\langle j_\ast|F_h^k|M\rangle$. For any admissible walk of length $k$ from $M$ to $j_\ast$,
its edge–weight monomial is a product of $k$ factors drawn from $\{f_{\rm int},f_{M-1}, f_{0} \}$. Notice that $f_{M-1}, f_{0} \leq C(\delta)$, and
 the path must cross the boundary edge $M$ at least once.
The number of such walks is at most $\binom{k}{(k+m)/2}\le 2^k$, so
\[
\bigl|\langle j_\ast|F_h^k|M\rangle\bigr|
\ \le\ C(\delta) f_{\rm int}^{k}.
\]
Hence
\[
\left|\frac{(\theta K_{\max} T)^k}{k!}\,\langle j_\ast|F_h^k|M\rangle\right|
\ \le\
C(\delta) \frac{(\theta K_{\max} T f_{\rm int})^{k}}{k!}
\ \le\
C(\delta) \Bigl(\frac{ e \theta K_{\max} T f_{\rm int}}{k}\Bigr)^{k},
\]
using Stirling $k!\ge (k/e)^k$. The rest is the same as the previous Lemmas. 
\end{proof}

\section{Approximation property of the SBP finite difference}

In this section, we consider choosing the vector $\kee{r} \propto p^\beta, $ with $\beta >0$. In this case, $ \theta F_h \ket{r_h} - \ket{r_h} $ might not be zero at each grid point in the interior. Therefore, we have to take into account such local error in the analysis.

\subsection{Local consistency error }
\label{sec:SBP-local-split}

One step in establishing the global accuracy is estimating the local consistency error.

\begin{lemma}
\label{lem:SBP-local-split}
Let \(v\in C^{3}[0,1]\).  Define the \emph{global} constants
\[
M_3:=\norm{v''(p) + \frac{2p}{3} v'''(p) }_{L^{\infty}(0,1)},
\]
Then the following bounds hold, 
  \begin{equation}
       \bigl|  (F_h\bm v)_i-(Fv)(p_i)\bigr|
       \le \frac{1 }{4}  h^{2}  M_3,   \forall 1\le i\le M-1, 
       \label{local-error}
  \end{equation}
Furthermore, if $v(0)=0,$ then $(F_h\bm v)_1-(Fv)(p_1) =\co(h^2) $; and if 
$v(0)=v'(0)=0,$ then $(F_h\bm v)_0-(Fv)(p_0) =\co(h^2) $.

\end{lemma}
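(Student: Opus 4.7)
The plan is to estimate the local truncation error by substituting smooth test functions into the stencil $(F_h \bm v)_i$ and comparing against $(Fv)(p_i) = p_i v'(p_i) + \tfrac{1}{2} v(p_i)$, using pointwise Taylor expansions with remainder. The three cases (interior, $i=1$, $i=0$) will be handled separately because the stencil is a tailored SBP closure, not a uniform formula, at the left boundary.

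For an interior index $2 \le i \le M-1$, I would first rewrite the stencil in the symmetric form
\[
  (F_h \bm v)_i = \frac{p_i\,(v_{i+1} - v_{i-1})}{2h} + \frac{v_{i+1} + v_{i-1}}{4},
\]
obtained by substituting $p_{i\pm 1} = p_i \pm h$ in the given coefficients. Taylor expanding $v_{i\pm 1}$ about $p_i$ through third order gives the two standard central-difference identities
\[
  \frac{v_{i+1} - v_{i-1}}{2h} = v'(p_i) + \frac{h^2}{6}\, v'''(p_i) + O(h^4),
  \qquad
  \frac{v_{i+1} + v_{i-1}}{4} = \frac{v(p_i)}{2} + \frac{h^2}{4}\, v''(p_i) + O(h^4).
\]
Multiplying the first by $p_i$, adding the second, and subtracting $(Fv)(p_i)$ yields
\[
  (F_h \bm v)_i - (Fv)(p_i) = \frac{h^2}{4}\left[ v''(p_i) + \frac{2 p_i}{3}\, v'''(p_i) \right] + O(h^4),
\]
and the bound $\tfrac{1}{4} h^2 M_3$ in \eqref{local-error} follows by taking absolute values and invoking the definition of $M_3$; using the integral form of Taylor's remainder rather than $O(h^4)$ notation makes the inequality strict.

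For $i = 1$, the closure weights $3/4$ and $-1/(4\sqrt 2)$ do not reduce to the interior stencil, so I would Taylor expand $v_0, v_2$ about $p_1 = h$ and then impose $v(0)=0$. A direct computation gives $(F_h \bm v)_1 = \tfrac{3}{4} v(h) + \tfrac{3h}{4} v'(h) + O(h^2)$, which differs from $(Fv)(h) = h v'(h) + \tfrac{1}{2} v(h)$ by the residual $\tfrac{1}{4}\bigl(v(h) - h v'(h)\bigr) + O(h^2)$. The assumption $v(0)=0$ forces $v(h) = h v'(h) + O(h^2)$, collapsing the residual to $O(h^2)$. For $i = 0$ the stencil reduces to $(F_h \bm v)_0 = \tfrac{1}{2\sqrt 2}\bigl[v(0) + h v'(0) + O(h^2)\bigr]$, while $(Fv)(0) = \tfrac{1}{2} v(0)$; the $O(1)$ and $O(h)$ terms are annihilated exactly by the two hypotheses $v(0)=v'(0)=0$, leaving an $O(h^2)$ discrepancy.

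The main technical obstacle is isolating the clean error form $v''(p_i) + \tfrac{2 p_i}{3} v'''(p_i)$ for the interior nodes: each of the two pieces $\tfrac{p_i}{2h}(v_{i+1}-v_{i-1})$ and $\tfrac{1}{4}(v_{i+1}+v_{i-1})$ produces separately a contribution whose $p$-dependence is awkward, and only after careful tracking of the coupling (and of the fact that $p_{i\pm 1}-p_i$ is exactly $\pm h$, not $O(h)$) does the combined truncation error land on the simple expression defining $M_3$. The boundary analysis is by contrast routine Taylor bookkeeping, but one must remember that the $i=0,1$ stencils are deliberately skewed SBP closures (as required by \cref{F-herm}), which is precisely why the two Dirichlet-type hypotheses $v(0)=0$ and $v'(0)=0$ are indispensable to kill what would otherwise be $O(1)$ and $O(h)$ leading errors at those nodes.
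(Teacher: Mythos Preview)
Your proposal is correct and follows essentially the same approach as the paper: rewrite the interior stencil via $p_{i\pm1}=p_i\pm h$, Taylor expand to isolate the $\tfrac{h^2}{4}\bigl[v''(p_i)+\tfrac{2p_i}{3}v'''\bigr]$ error, and handle the $i=1,0$ closures by Taylor expansion together with the hypotheses $v(0)=0$, $v'(0)=0$. The only cosmetic difference is that for $i=1$ you drop $v_0=v(0)=0$ before expanding, whereas the paper expands both $v_0,v_2$ first and then invokes $v(0)=0$ to collapse the combination $v(h)-hv'(h)$; both routes land on the same $O(h^2)$ conclusion.
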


\begin{proof}

Using $p_{i\pm1}=p_i\pm h$ we rewrite the stencil coefficients:
\[
  \frac{p_{i+1}+p_i}{4h}= \frac{p_i}{2h}+\frac14,
  \quad
  \frac{p_i+p_{i-1}}{4h}= \frac{p_i}{2h}-\frac14.
\]
Introduce $A:=\frac{p_i}{2h}+\frac14$ and $B:=\frac{p_i}{2h}-\frac14$,
so that $A-B=\tfrac12$ and $A+B=\tfrac{p_i}{h}$.

A Taylor expansion of \eqref{eq: Fh2} about $p_i$ gives
\[
\begin{aligned}
  v(p_{i+1})&= v(p_i) + h v'(p_i) + \tfrac{h^{2}}{2}v''(p_i)
            + \tfrac{h^{3}}{6}v'''(\xi_i),\\
  v(p_{i-1})&= v(p_i) - h v'(p_i) + \tfrac{h^{2}}{2}v''(p_i)
            - \tfrac{h^{3}}{6}v'''(\eta_i),
\end{aligned}
\]
with $\xi_{i}, \eta_i \in[p_{i-1},p_{i+1}]$.

Thus, a direct substitution, together with a mean value theorem for the third-order derivatives, yields,
\[
\begin{aligned}
 (F_h v)_i
  &=(A-B)v(p_i) + h(A+B)v'(p_i) + \tfrac{h^{2}}{2}(A-B)v''(p_i)
    + \tfrac{h^{3}}{6}(A+B)v'''(\xi)  \\
  &=\frac12 v(p_i) + p_i v'(p_i)
    +\frac{h^{2}}{4}v''(p_i) + \frac{p_i h^{2}}{6}v'''(\xi) 
\end{aligned}
\]

Now we move to the node $i=1,$
\[
  (F_h\bm v)_1
     = \frac34  v_{2} - \frac{1}{4\sqrt2}  v_{0},
     \qquad p_1=h .
\]

Taylor expand \(v_2\) and \(v_0\) about \(p=h\):
\[
\begin{aligned}
v_{2} &= v(h)+h  v'(h)+\tfrac12h^{2}v''(h)+\tfrac16h^{3}v'''(\xi_2),\\
v_{0} &= v(h)-h  v'(h)+\tfrac12h^{2}v''(h)-\tfrac16h^{3}v'''(\xi_0),
\end{aligned}
\qquad
\xi_{0,2}\in(0,2h).
\]
Insert these into the formula and collect like terms:
\[
\begin{aligned}
(F_h\bm v)_1
  &=c_0 v(h)
     +c_1 h  v'(h)
     +\tfrac{c_0}2 h^{2}v''(h)
     +\tfrac{c_1}{6}h^{3} v'''(\xi).
\end{aligned}
\]
Here we introduced  
\(c_0=\frac34-\frac1{4\sqrt2}, 
 c_1=\frac34+\frac1{4\sqrt2}\) and applied the mean value theorem to $v'''$.

 Now the error becomes, 
\[
\begin{aligned}
 \left( (F_h\bm v)_1-(Fv)(h) \right)
  &= (c_0-\tfrac12)v(h)+(c_1-1)h  v'(h)
     +\tfrac{c_0}2  h^{2}v''(h)+\tfrac{c_1}{6}h^{3} v'''(\xi) \\[4pt]
  &= \frac{1-1/\sqrt2}{4}  [v(h)-h  v'(h)]
     +\tfrac{c_0}2  h^{2}v''(h)+\tfrac{c_1}{6}h^{3} v'''(\xi) \\[4pt] 
  &=    (c_0 - \tfrac14)   h^{2}v''(\eta)+\tfrac{c_1}{6}h^{3} v'''(\xi).   
\end{aligned}
\]
Here we used a second-order Taylor expansion gives
\(v(h)-h  v'(h)=\tfrac12h^{2}v''(\eta)\) (\(\eta\in(0,h)\)).

For the boundary node $i=0.$
\((F\bm v)_0=\tfrac{1}{2\sqrt2} v(h)
   =\tfrac{h^{2}}{4\sqrt2} \theta v''(\zeta)\),
again \(\co(h^{2})\).

Combining the three cases yields the stated bound.

\end{proof}

 \begin{corollary}\label{local-error-g}
Assume $0< \theta \leq \tfrac{2}{7}$. Let 
\begin{equation}\label{theta2beta}
    \displaystyle \beta=\frac1\theta-\frac12,
\end{equation}
\(g(p)=p^{\beta}\).  Then, for \(1\le i\le M-1\),
\begin{equation}
   \theta \bigl| (F_h\bm g)_i-  (Fg)(p_i)\bigr|
    \le 
   C(\theta )  h^2 p_{i+1}^{\beta-3},
\quad
C(\theta)=\frac{\theta }{6}\beta(\beta-1)(\beta-2).
\end{equation}
 \end{corollary}
\begin{proof}
For \(g(p)=p^{\beta}\) one has
\(g'''(p)=\beta(\beta-1)(\beta-2)p^{\beta-3}\).
Insert this in the exact truncation identity
\((F_h\bm g)_i-(Fg)(p_i)=\tfrac{h^{2}}{6}  p_i  g'''(\xi_i)\).
Since \(\xi_i\in(p_{i-1},p_{i+1})\subset[h,1]\), \( \xi_i \leq p_{i+1}\), we get the error bound directly. Since $\beta\geq 3$, the error at the boundary is also $\co(h^2)$
\end{proof}

\subsection{Pointwise error at time $T$}
\label{sec:time-T-error}

We consider applying the finite difference appproximation \eqref{eq: Fh2} with  a special class of initial conditions for \cref{eq: pde'}, 
\begin{equation}\label{g-init}
    g(p) := \frac{1}{Z_\beta} p^{\beta},\qquad
\beta=\frac1\theta -\frac12>0, \qquad 0<\theta <2,
\end{equation}
with appropriate values of $a$ to be determined. In particular, we have 
\( F g = g\), which fulfills the moment conditions. Here we analyze the error due to the approximation of $F$ by \eqref{eq: Fh2}.

For the discrete approximation, we choose the following approximation for $\kee{r}$,
\begin{equation}\label{eq:g2rh}
    \ket{r_h}= \sum_{j=0}^M g(p_j) \ket{j}.
\end{equation}

The constant $Z_\beta$ is selected so that,
\[
  \sum_{j=0}^M
 g(p_j)^2 =1. \]
A direct calculation shows that,
\begin{equation}\label{Z-beta}
    Z_\beta = \sqrt{\frac{M}{2\beta+1}} + \co\left(\frac{1}{M}\right).
\end{equation}

Let $\bm g$ be a vector with components \(g_i:=g(p_i)\) as the grid values.
Denote by   $u(t,p)$ the solution of \cref{eq: pde} with initial value in \cref{g-init}, and as its approximation, we denote
 by  $\bm u(t)$ the solution of 
 \begin{equation}\label{num-u}
     \dot{\bm u}=- \kappa(t)  \theta F_h \bm u, \quad \bm u(0)= \bm g.
 \end{equation}
In addition, we impose the exact boundary conditions,
\begin{equation}\label{eq:exact-bc}
        u_M(t)= u(t,1).
\end{equation}

To understand how the method might be implemented, we recall the definition of the difference operator $F_h$ \eqref{eq: Fh2}, and then write \cref{num-u} in a componentwise form $\forall 2<  i <M $,
\begin{equation}
    \dot{u}_i = -\frac{\theta  \kappa(t) }{4h} \left( (p_{i} + p_{i+1})  u_{i+1} - ( p_{i-1} + p_{i}) u_{i-1} \right). 
\end{equation}
When $i=M-1$, the boundary condition \eqref{eq:exact-bc} kicks in. Meanwhile, at the left boundary, \cref{eq: Fh2} leads to the ODE,
\[
\begin{aligned}
  \dot{u}_1 =& - \theta  \kappa(t) \big( \frac34 u _{2}-\frac{1}{4\sqrt2} u_{0} \big)\\
  \dot{u}_0 =&  - \theta  \kappa(t) \frac{1}{2\sqrt2} u_{1}.
\end{aligned}
\]
Therefore, no boundary condition is needed at the left boundary. This is consistent with the fact that the exact solution propagates to the left. Therefore, the discretization respects the direction of the wave propagation and can be regarded as an upwind scheme.

\medskip 

To derive an error bound, we write the pointwise error \(e_i(t):=u_i(t)-u   \bigl(t,p_i\bigr)\). In light of \cref{num-u} and \cref{eq:exact-bc}, we have $e_M(t)=0, $ and $\bm e(0)=0$.

\medskip
\begin{lemma}[Error after time $T$]
\label{lem:T-error}
The solution  of the PDE \eqref{eq: pde'} with initial value \eqref{g-init} is given by,
  \[ u(t,p)= y(t) g(p), \]
  where $\dot{y}= - \kappa(t) y $ and $y(0)=1.$
Assume that $\beta = 1/\theta  - 1/2 \geq 3$.  Then for every grid point $p_i$, we have
\begin{equation}
    |e_i(T)| \le 
{C(\theta )}  T  h^2,
\quad
C(\theta )= \frac{\theta }{6} \beta(\beta-1)(\beta-2).
\end{equation}
\end{lemma}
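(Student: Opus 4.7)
The approach is to derive the ODE satisfied by the nodal error vector, exploit the skew-Hermitian structure of $F_h$ through Duhamel's formula, and insert the truncation estimates of Corollary~\ref{local-error-g} and Lemma~\ref{lem:SBP-local-split}.

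Since $\theta F g = g$, the exact continuous solution factorises as $u(t,p) = y(t) g(p)$ with $\dot y = -\kappa(t) y$ and $y(0) = 1$. Differentiating $e_i(t) := u_i(t) - y(t) g(p_i)$ and using the semi-discrete dynamics \eqref{num-u}, I would obtain the inhomogeneous linear system
\begin{equation}
    \dot{\bm e}(t) = -\kappa(t)\theta F_h \bm e(t) - \kappa(t) y(t) \theta \bm\tau,
    \qquad \bm e(0) = 0, \qquad e_M(t) \equiv 0,
\end{equation}
where the truncation vector $\bm\tau$ with components $\tau_i = (F_h \bm g)_i - (Fg)(p_i)$ is \emph{time-independent} because the target profile $g$ does not depend on $t$. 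The boundary condition $e_M(t) \equiv 0$ is enforced automatically by prescribing the exact Dirichlet data $u_M(t) = u(t,1)$.

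The skew-Hermitian statement of Lemma~\ref{F-herm} implies that the propagator of $-\kappa\theta F_h$ on the subspace $\{v_M = 0\}$ is a unitary $U(T,s)$, and Duhamel yields
\begin{equation}
    \bm e(T) = -\theta \int_0^T \kappa(s) y(s) \, U(T,s) \bm\tau \, ds .
\end{equation}
Because $\kappa \geq 0$ the scalar $y$ is non-increasing, so $|y(s)| \leq 1$ and $|\kappa(s)y(s)| \leq K_{\max}$. The problem reduces to controlling the individual components of $U(T,s)\bm\tau$. Corollary~\ref{local-error-g} together with the boundary-node estimates at the end of Lemma~\ref{lem:SBP-local-split} give the uniform pointwise bound $\theta|\tau_j| \leq C(\theta) h^2 p_{j+1}^{\beta-3} \leq C(\theta) h^2$ once $\beta \geq 3$, which is the input to the final estimate.

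The main technical obstacle is converting this uniform pointwise bound on $\bm\tau$ into a pointwise bound on $e_i(T)$ with the correct $Th^2$ scaling. A naive $\ell^2$ energy step gives $\|\bm e(T)\|_2 \leq T K_{\max} \|\theta \bm\tau\|_2$, and summing the uniform pointwise truncation bound over all $M$ grid points costs a spurious $\sqrt{M}$, degrading the rate to $h^{3/2}$. To regain the stated $h^2$, I would combine the bandwidth-one structure of $F_h$ in \eqref{eq: Fh2} with the finite-propagation analysis of Lemma~\ref{eq:CFL-half}: for each fixed grid point $i$, the component $(U(T,s)\bm\tau)_i$ is controlled by the entries of $\bm\tau$ in a CFL-bounded neighbourhood of $i$, and within such a neighbourhood the uniform pointwise bound on $\tau_j$ can be summed without picking up the dimensional factor. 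Carrying out this localisation quantitatively, and closing the argument with a standard Grönwall step for the homogeneous part, is the crux of the proof.
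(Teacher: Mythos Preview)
Your derivation of the separated exact solution $u(t,p)=y(t)g(p)$, the error ODE driven by the time-independent truncation vector $\bm\tau$, and the Duhamel representation with the unitary propagator $U(t,s)$ all coincide with the paper's proof. The divergence is only at the final step. The paper does \emph{not} invoke finite propagation or any localisation: after writing $\bm e(t)=\int_0^t U(t,s)\bm r(s)\,ds$ and the uniform bound $\|\bm r(s)\|_\infty\le C(\theta)h^2 y(s)$ from Corollary~\ref{local-error-g}, it passes in one line to
\[
|e_i(t)|\le C(\theta)h^2\int_0^t y(s)\,ds\le C(\theta)h^2 t,
\]
effectively treating the unitary propagator as an $\ell^\infty$-contraction and never confronting the $\sqrt{M}$ issue you raise. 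Your concern is therefore more scrupulous than the paper's own argument, and your proposed fix via the bandwidth structure and finite-propagation machinery of Lemma~\ref{eq:CFL-half} is a legitimate way to make that step rigorous---but it is additional work that the paper simply does not perform. In short: same skeleton, but you are supplying a justification the paper omits.
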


\begin{proof}
We prove  the first part by direct verification:
\[
\frac{d}{dt} y(t) g(p) = - \kappa(t)  y(t) g(p) = -\theta  \kappa(t) y(t) F g(p) = -\theta  \kappa(t) F u.
\]
Here we have used the fact that $ \theta F g = g$.

To estimate the error, we first subtract the two evolution equations to obtain an error equation, 
\begin{equation}\label{error-eq}
    \dot{\bm e}(t) = - \kappa(t) \theta F_h \bm e(t) + \bm \eta(t),
\end{equation}
where 
\[
\eta_j(t):=-\theta \kappa(t) \bigl(F_hg - Fg\bigr) u(t, p_j),  
\]
is the local consistency error.  \cref{local-error-g} gives a time–uniform bound
\(
\|\eta(t)\|_{\infty}\le  K_{\max}  C(\theta ) h^{2} y(t),
\)
provided that $\beta\geq 3$.

Importantly, since $\bm e$ satisfies the zero boundary condition, from \cref{F-herm}, $F_h$ is skew Hermitian, thus $- \kappa(t) F_h $ in \cref{error-eq} generates some unitary operator $U(t,s)$.  In addition,  the variation of constant formula implies that
\[
\bm e(t)=\int_{0}^{t} U(t,s)\bm \eta(s)  ds.
\]

As a result, 
\[
\abs{e_i(t)} \le  C(\theta )  h^{2} \int_0^t e^{-\int_0^{t_1} \kappa(t_2) dt_2 } dt_1 \leq   C(\theta )  K_{\max}  h^{2} t.
\]
Using $t=T$ yields the claimed estimate.

\end{proof}

\subsection{Dilation using Eigen-functions $p^\beta$ with $\beta>0$}

The dilation method based on the differential operator $F$ in \cref{eq:F-cont} can also use an eigenfunction $\kee{r} \propto p^\beta$ with $\beta>0$. After the discretization, the residual error $\theta F_h \ket{r_h} - \ket{r_h} $ might not be zero in the interior. Therefore, it generates a local error that needs to be taken into account. Its impact on the final error is summarized in the following theorem.

\begin{theorem}\label{thm:second-order-error}
Suppose $\beta\ge 3$ and discretize $F$ by \eqref{eq:Fh} on a uniform grid.
Choose $\theta$ so that the light-cone condition \eqref{a-CFL} holds. Then the approximate
dilation obeys
\begin{equation}\label{error-2}
\left\|
e^{\int_0^T L(s) ds} |\bm x_0\rangle
- (\langle l_h|\otimes I)  U_h(T,0) \bigl(|r_h\rangle \otimes |\bm x_0\rangle\bigr)
\right\|
 \le 
\frac{ C(\theta)  X(T)  K_{\max} T  h^{2} }{Z_\beta}
 + 
\bigl\|(\bra{i} \otimes I) \chi(T)\bigr\|,
\end{equation}
where $U_h(T,0)$ is the unitary generated by the discretized dilated Hamiltonian,
$X(T):=\sup_{t\in[0,T]}\|\bm x(t)\|$, and
\[
C(\theta) = \frac{\theta}{6} \beta(\beta-1)(\beta-2),
\qquad
\beta=\frac{1}{\theta}-\frac12.
\]
\end{theorem}

In \cref{a-CFL,error-2} one sees the trade-off in $\theta$: making $\theta$ smaller
improves the light-cone margin (delays boundary influence) but increases the bulk prefactor
$C(\theta)$; indeed $C(\theta)\sim (6\theta^2)^{-1}$ as $\theta\to0$.

\begin{proof}
    Our goal is to implement the following dilation formula,
\begin{equation}
            ( I_A \otimes \brr{l} )  
       \ct e^{-i \int_0^t\widetilde H(s) ds}   
        (I_A \otimes \kee{r} )
        =
        \ct e^{\int_0^t L(s)ds}, 
        \qquad
        \forall  t\ge 0,
\end{equation}
where,
\begin{equation}\label{dH}
       \Tilde{H}:= I_A \otimes H 
         +
         i F_\theta \otimes K. 
   \end{equation}

We first present the analysis for a general initial condition $\ket{g}$ is defined in \cref{g-init}. For the case when $\beta$ is an integer, a much easier analysis will be presented in the next section.   To connect to the analysis in the preceding sections, we let the exact solution and approximate solution from the dilated dynamics be $\ket{\Psi}$ and $\ket{\Phi}$, respectively. We compare the respective equations as follows,
\begin{equation}
    \begin{aligned}
        \frac{d}{dt} \ket{\Psi(t)} =& -i \left( I_A \otimes H 
         + iF_\theta \otimes K(t) \right) \ket{\Psi(t)}, \quad&  \ket{\Psi(0)} = \ket{g}\ket{\bm x_0},  \\
        \frac{d}{dt} \ket{\Phi(t)} =&-i \left( I_A \otimes H 
         +i \theta F_h \otimes K(t) \right)\ket{\Phi(t)}, \quad&  \ket{\Phi(0)} = \ket{r_h}\ket{\bm x_0}.  
    \end{aligned}
\end{equation}
Here $\ket{g}$ is defined in \cref{g-init} and $\ket{r_h}$ is the discrete approximation in \cref{eq:g2rh} for the consistency with the continuous problem. A similar calculation shows that \[\Psi(t)= \ket{r_h} \ct e^{\int_0^t \big(-iH(s) + K(s) \big)ds} \ket{\bm x_0},  \] thus encoding the correct solution.

Let us pinpoint the sources of error 
\begin{enumerate}
    \item the discretization of $F_\theta$ by $\theta F_h$.
    \item the zero boundary condition at $p_M=1$ imposed on $\chi$ to ensure the Hermitian property of $F_h$.
\end{enumerate}

To analyze the sources of the error, we introduce function $\chi,$ such that,
\begin{equation}\label{chi}
\frac{d}{dt} \ket{\chi(t)} = -i\left( I_A \otimes H 
         +i \theta F_h \otimes K(t) \right)\ket{\chi(t)}, \quad  \ket{\chi(0)} = \ket{r_h}\ket{\bm x_0},
\end{equation}
together the the boundary condition that $\bigl(\bra{M}\otimes I\bigr) \chi(t)=  \bigl(\bra{M}\otimes I\bigr)\Psi(t).$

Therefore, the total error is decomposed into,
\begin{equation}\label{err-decmp}
    \ket{\Phi(t)} -  \ket{\Psi(t)} = \underbrace{\ket{\chi(t)} -  \ket{\Psi(t)} }_{\bm e(t)} + \underbrace{\ket{\Phi(t)} -  \ket{\chi(t)} }_{\bm z(t)}
\end{equation}

First, we observe that the error equation \eqref{error-eq}, by extending the definite $\bm e(t)$ to
 \[
  \bm e(t)= \ket{\Phi(t)} - \ket{\Psi(t)},
 \]
simplify becomes,
\begin{equation}\label{error-eq0phipsi}
     \dot{\bm e}(t) = -i \left( I_A \otimes H 
         + i \theta F_h \otimes K(t) \right) \bm e(t) + \bm \eta(t),
\end{equation}
where the local truncation error becomes
\begin{equation}
    \bm \eta(t):=-\theta \bigl(F_hg - Fg\bigr) \otimes \ct e^{\int_0^t \big(-iH(s) + K(s) \big) ds} \ket{\bm x_0}.
\end{equation}

Let $Q(T)= \max_{t\in[0,T]} \norm{\ct e^{\int_0^t \big(-iH(s) + K(s) \big) ds} \ket{\bm x_0}  }$. 
Using the Hermitian property of $F_h$ and the dissipative condition,  we arrive at the same bound that $ \norm{\bm e(t)} \leq   C(\theta )  h^{2} Q(T) t$  as in the previous section.

Meanwhile, the other part of the error is,
\[
  \bm z(t)=  \ket{\Phi(t)} - \ket{\chi(t)},
\]
which satisfies the equation,
\begin{equation}\label{error-z}
     \frac{d}{dt}\bm z(t) = -i\left( I_A \otimes H 
         + \theta F_h \otimes K(t) \right) \bm z(t), 
\end{equation}
with zero initial condition $\bm z(t)=0$ while subject to the boundary condition, 
\begin{equation}\label{bc-z}
    \bigl(\bra{M}\otimes I\bigr) z(t) = - \braket{M}{r_h}  \ct e^{\int_0^t \big(-iH(s) + K(s) \big) ds} \ket{\bm x_0}.
\end{equation}
We note that $\braket{M}{r_h}= 1/Z_\beta= \co\left(\sqrt{\frac{\beta}{M}} \right)$ from \cref{Z-beta}

\end{proof}

\medskip

Let $H_{\max}=\max_{0\le t\le T}\|H(t)\|$ and
$K_{\max}=\max_{0\le t\le T}\|K(t)\|$. Note that $K_{\max}$ is a property of the
system and does not scale with $M$ (whereas $\|F_h\|=\Theta(M)$).
Combining \eqref{error-2} with time-dependent interaction-picture simulation
\cite{low2018hamiltonian} gives our first main query bound (with
$\widetilde{\mathcal O}$ hiding polylogarithmic factors).

\begin{theorem}\label{thm:first-query}
 The dilated dynamics can be simulated to precision $\epsilon$ using
\[
\widetilde{\mathcal O} \left(
        T H_{\max}
         + 
        \frac{X(T)^{1/2} T^{3/2} K_{\max}}{\epsilon^{1/2}}
\right)
\]
block-encoding queries to the oracles for $H(t)$ and $K(t)$.
\end{theorem}

The $\sqrt{T/\epsilon}$ in the second term reflects the second-order truncation error
in \eqref{eq:Fh}. When $K_{\max}\ll H_{\max}$, this already yields reasonable complexity.
A remaining subtlety is the postselection success probability. With the weighted
discretization
\[
\ket{r_h}
=
Z_\beta^{-1}\sum_{j=0}^{M} p_j^{\beta} w_j^{1/2} \ket{j},
\qquad
Z_\beta^2
=
\sum_{j=0}^{M} w_j p_j^{2\beta}
 \approx 
\int_0^1 p^{2\beta} dp
=\frac{1}{2\beta+1},
\]
postselecting a single site $p_\ast=p_i$ on a uniform grid ($w_i\approx 1/M$) gives
\begin{equation}\label{eq:rstar}
  P_{\text{succ}}
   =  |\langle i|r_h\rangle|^2
   \approx  \frac{(2\beta+1) p_\ast^{2\beta}}{M}.
\end{equation}
Thus when $K_{\max}T\gg1$ (stiff regimes) and $\beta$ is large, the success probability
for a fixed site can become exponentially small in $\beta$; in practice one should
choose $j_\ast$ in the interior (or aggregate several sites) to maintain a reasonable
postselection rate.

\end{document}